\documentclass[english, 12pt]{article}

\usepackage{geometry}
\usepackage{enumitem}
\geometry{verbose,lmargin=0.05\paperwidth,rmargin=0.05\paperwidth}
\usepackage{array}
\usepackage{relsize}
\usepackage{stmaryrd}
\usepackage{fullpage}
\usepackage{fourier}
\usepackage[T1]{fontenc}
\usepackage{calrsfs}
\DeclareMathAlphabet{\pazocal}{OMS}{zplm}{m}{n}
\usepackage{microtype}
\usepackage{multirow} 
\usepackage{cite}
\usepackage{amsfonts}
\usepackage{amssymb}
\usepackage{amstext}
\usepackage{amsmath}
\usepackage{mathtools}
\usepackage{paralist}

\usepackage{tikz}
\usetikzlibrary{calc}
\usetikzlibrary{shapes,arrows,positioning,shadows,snakes}

\usepackage{footmisc}


\usepackage{amssymb,amsmath,amsthm,amsfonts}
\newtheorem{theorem}{Theorem}
\newtheorem{lemma}{Lemma}
\newtheorem{conjecture}{Conjecture}

\usepackage{pifont}
\usepackage[hang,small,bf]{caption}
\usepackage{subcaption}

\usepackage{nameref}
\usepackage[linktocpage=true,pagebackref=true]{hyperref}
\usepackage{cleveref}

\usepackage{thmtools,thm-restate} 




\usepackage{graphicx}
\usepackage{graphics}
\usepackage{colordvi}
\usepackage{xspace}
\usepackage{algorithm}
\usepackage{algorithmicx}
\usepackage{algpseudocode}
\usepackage{url}
\usepackage{enumitem}
\usepackage{authblk}





\newcommand*{\shifttext}[2]{%
  \settowidth{\@tempdima}{#2}%
  \makebox[\@tempdima]{\hspace*{#1}#2}%
}

\newcommand\TombStone{\rule{.7ex}{1.7ex}}
\renewcommand{\qedsymbol}{\TombStone}
\newcommand{\qedd}{\let\qed\relax\quad\raisebox{-.1ex}{$\qedsymbol$}}

        {\hspace*{\fill}$\Box$\par\vspace{4mm}}

\newtheorem{proposition}[theorem]{Proposition}

\newcommand{\I}{{\pazocal I}}



\newcommand{\be}{\begin{enumerate}}
\newcommand{\ee}{\end{enumerate}}
\newcommand{\bd}{\begin{description}}
\newcommand{\ed}{\end{description}}
\newcommand{\bi}{\begin{itemize}}
\newcommand{\ei}{\end{itemize}}




\renewcommand{\phi}{\varphi}

\newcommand{\Ac}{\pazocal{A}}
\newcommand{\Rc}{\pazocal{R}}
\newcommand{\OPTR}{\mathsf{OPT^R}}
\newcommand{\MIR}{\mathsf{MIR}}
\newcommand{\OPTM}{\mathsf{OPT^M}}
\newcommand{\OPTML}{\mathsf{OPT_{\boxslash}^M}}
\newcommand{\OPTMR}{\mathsf{OPT_{\boxbslash}^M}}
\newcommand{\OPTS}{\mathsf{OPT^S}}
\newcommand{\OPTSp}{\mathsf{OPT^{S}_{\boxslash}}}
\newcommand{\OPTRp}{\mathsf{OPT^{R}_{\elbr}}}
\newcommand{\OPTRm}{\mathsf{OPT^{R}_{\elbl}}}
\newcommand{\OPTSm}{\mathsf{OPT^{S}_{\boxbslash}}}
\newcommand{\elbl}{\mbox{(\rotatebox[origin=c]{90}{\reflectbox{$\llcorner$}} \rotatebox[origin=c]{90}{\reflectbox{$\urcorner$}}\!\!)}\xspace}
\newcommand{\elbr}{\mbox{(\!\!\reflectbox{\rotatebox[origin=c]{90}{\reflectbox{$\urcorner$}}} \reflectbox{\rotatebox[origin=c]{90}{\reflectbox{$\llcorner$}}}\hspace{.07em})}\xspace}

\newcommand{\elba}{\mbox{(\rotatebox[origin=c]{90}{\reflectbox{$\llcorner$}}\hspace{.25em}\reflectbox{\rotatebox[origin=c]{90}{\reflectbox{$\llcorner$}}}\hspace{.07em})}\xspace}

\newcommand{\elbb}{\mbox{(~\rotatebox[origin=c]{90}{\rotatebox[origin=c]{90}{\reflectbox{$\llcorner$}}\hspace{.25em}\reflectbox{\rotatebox[origin=c]{90}{\reflectbox{$\llcorner$}}}\hspace{.07em}})}\xspace}

\newcommand{\elbc}{\mbox{(\rotatebox[origin=c]{180}{\rotatebox[origin=c]{90}{\reflectbox{$\llcorner$}}\hspace{.25em}\reflectbox{\rotatebox[origin=c]{90}{\reflectbox{$\llcorner$}}}\hspace{.07em}})}\xspace}

\newcommand{\elbd}{\mbox{(\rotatebox[origin=c]{270}{\rotatebox[origin=c]{90}{\reflectbox{$\llcorner$}}\hspace{.25em}\reflectbox{\rotatebox[origin=c]{90}{\reflectbox{$\llcorner$}}}\hspace{.07em}}~)}\xspace}

\newcommand{\elbul}{\mbox{(\!\!\reflectbox{\rotatebox[origin=c]{90}{\reflectbox{$\urcorner$}}}\hspace{.07em})}\xspace}

\newcommand{\elbur}{\mbox{(\,\rotatebox[origin=c]{90}{\reflectbox{$\urcorner$}}\!\!)}\xspace}

\newcommand{\elbdr}{\mbox{(\,\reflectbox{\rotatebox[origin=c]{90}{\reflectbox{$\llcorner$}}}\hspace{.07em})}\xspace}

\newcommand{\elbtr}{\mbox{(\rotatebox[origin=c]{270}{\rotatebox[origin=c]{90}{\reflectbox{$\llcorner$}}\hspace{.25em}\reflectbox{\rotatebox[origin=c]{90}{\reflectbox{$\llcorner$}}}\hspace{.07em}}\rotatebox[origin=c]{90}{\hspace{.25em}\reflectbox{\rotatebox[origin=c]{90}{\reflectbox{$\llcorner$}}}\hspace{.07em}})}\xspace}

\newcommand{\elbdl}{\mbox{(\rotatebox[origin=c]{90}{\reflectbox{$\llcorner$}}\,)}\xspace}

\newcommand{\OPTsi}{\mathsf{OPT^{sS}}}
\newcommand{\OPTri}{\mathsf{OPT^{sR}}}
\newcommand{\diam}{\mathsf{diam}}





\def\markatright#1{\leavevmode\unskip\nobreak\quad\hspace*{\fill}{#1}}
\renewenvironment{proof}
  {\begin{trivlist}\item[\hskip\labelsep{\emph{Proof}.}]}
  {\markatright{\qed}\end{trivlist}}


\def\ShowComment{True}

\ifdefined\ShowComment

\def\thatchaphol#1{\marginpar{$\leftarrow$\fbox{T}}\footnote{$\Rightarrow$~{\sf #1 --Thatchaphol}}}

\else

\def\thatchaphol#1{}

\fi

\ifdefined\ShowComment

\def\laszlo#1{\marginpar{$\leftarrow$\fbox{L}}\footnote{$\Rightarrow$~{\sf #1 --LK}}}

\else

\def\laszlo#1{}

\fi


\title{Binary search trees and rectangulations 
} 
\author{L\'{a}szl\'{o} Kozma\thanks{Saarland University, Saarbr\"{u}cken, Germany. Email: \texttt{kozma@cs.uni-saarland.de}} \quad \quad Thatchaphol Saranurak\thanks{KTH Royal Institute of Technology, Stockholm, Sweden. Email: \texttt{thasar@kth.se}}}


\begin{document}

\maketitle

\begin{abstract}
We revisit the classical problem of searching in a binary search tree (BST) using rotations, and present novel connections of this problem to a number of geometric and combinatorial structures. In particular, we show that the execution trace of a BST that serves a sequence of queries is in close correspondence with the flip-sequence between two rectangulations. (Rectangulations are well-studied combinatorial objects also known as mosaic floorplans.) We also reinterpret \emph{Small Manhattan Network}, a problem with known connections to the BST problem, in terms of flips in rectangulations. We apply further transformations to the obtained geometric model, to arrive at a particularly simple view of the BST problem that resembles sequences of edge-relaxations in a shortest path algorithm.

Our connections yield new results and observations for all structures concerned. In this draft we present some preliminary findings. 
BSTs with rotations are among the most fundamental and most thoroughly studied objects in computer science, nonetheless they pose long-standing open questions, such as the dynamic optimality conjecture of Sleator and Tarjan (STOC 1983). 
Our hope is that the correspondences presented in this paper provide a new perspective on this old problem and bring new tools to the study of dynamic optimality.

\end{abstract}

\section{Introduction}

Binary search trees (BSTs) are among the simplest data structures for solving the \emph{dictionary problem} with keys from an ordered universe, supporting search, insert, delete, as well as other operations.\footnote{In this work we only focus on successful search operations, which we also call \emph{accesses}.} Alternatively, a BST can be seen as the implicit representation of a \emph{binary search} strategy for searching in an ordered list.

When a BST serves a \emph{sequence} of search queries, it is often advantageous to restructure the tree between queries (using rotations), paying a certain extra cost in the present in order to reduce the cost of queries in the future. Such a restructuring is called \emph{offline}, if it is done with advance knowledge of the entire query sequence, and it is called \emph{online}, if it may depend only on the queries already served.

Perhaps the best-known strategy for online BST re-arrangement is the Splay tree data structure of Sleator and Tarjan~\cite{ST85}. The cost of Splay was famously conjectured in 1983 to match the theoretical (offline) optimum on all inputs, up to a constant factor. An algorithm with this property is called \emph{constant-competitive}. The conjecture remains unresolved. An alternative, \emph{offline} algorithm was proposed by Lucas~\cite{Luc88} and independently by Munro~\cite{Mun00}, and conjectured to be constant-competitive. In a surprising development Demaine, Harmon, Iacono, Kane, and P\v{a}tra\c{s}cu (DHIKP)~\cite{DHIKP09} showed that the Lucas-Munro offline algorithm can be simulated by an online algorithm with only a constant factor slowdown. We refer to the resulting online algorithm of DHIKP simply as Greedy.

Several properties of Splay and Greedy are known, but we still seem far from proving constant-competitiveness for either of the two algorithms (or indeed, for any algorithm). Not only is it unknown whether an online algorithm can match the optimum, we also lack an efficient \emph{offline} algorithm for computing a good BST re-arrangement for a sequence of queries. 

Many of the recent results for the problem, including the development of the online Greedy algorithm, are based on a \emph{geometric view} of the BST model, introduced by DHIKP~\cite{DHIKP09}. (A somewhat similar model was described earlier by Derryberry, Sleator, and Wang~\cite{DSW}.) 

The elegance and usefulness of the geometric model lies in the fact that it hides the details of the tree re-arrangement (i.e.\ the concrete rotations that are performed), reducing the BST problem to a clean geometric optimization problem that requires finding a minimum \emph{satisfied superset} of a given point set in the plane. In this view, Greedy emerges as the most natural algorithm, equivalent to a simple geometric sweepline strategy. The geometric view seems less suitable for analysing algorithms other than Greedy (such as Splay, or Tango trees~\cite{Tango}). Furthermore, the property of a point set of being satisfied is \emph{non-monotone}, i.e.\ adding more points to a satisfied point set may destroy the property -- this unusual characteristic of the geometric optimization problem makes it difficult to apply standard algorithmic techniques to it. Informally, the difficulty in designing algorithms in the geometric view stems from the fact that Greedy is so natural, that it is unclear why any algorithm should deviate from it.

\paragraph{Rectangulations.}

The geometric model of DHIKP (described in \textsection\,\ref{sec2}) is the starting point of the work presented in this paper. We show a surprising equivalence of this geometric model with a well-studied and rich combinatorial structure. 

A \emph{rectangulation}\footnote{Alternative names include: rectangular subdivision, dissection, mosaic floorplan, or tesselation.} of an axis-parallel rectangle $R$ is a subdivision of $R$ into rectangles by axis-parallel line segments, no two of which may cross. A rectangulation is called \emph{slicing} (or \emph{guillotine}) if it can be obtained by recursively cutting a rectangle with a horizontal or vertical line into two smaller rectangles. See Figure~\ref{fig_r} for illustration.

The study of rectangulations is motivated by several applications. For geometric problems such as \emph{point location}, \emph{nearest neighbor}, or \emph{range searching}, the commonly used data structures rely on spatial subdivisions such as trapezoidations or rectangulations~\cite{comp_geom_book, seidel_planar_point}. The popular $k$-d tree corresponds (in the planar case) exactly to a slicing rectangulation~\cite{Bentley}. Rectangulations also appear in geometric approximation algorithms~\cite{Mit96}. In data visualization, ``cartograms'' based on rectangulations have been used for almost a century to represent both quantitive and relational information~\cite{eppstein_mumford, cartogram}. Rectangulations are also used to model problems in VLSI circuit design~\cite[\textsection\,53]{Handbook}. In communication complexity, a comparison protocol~\cite{kushilevitz} for a bivariate function $f$ corresponds to a slicing rectangulation in which every rectangle is $f$-monochromatic.

\begin{figure}[h]
\centering
\includegraphics[width=0.8\textwidth]{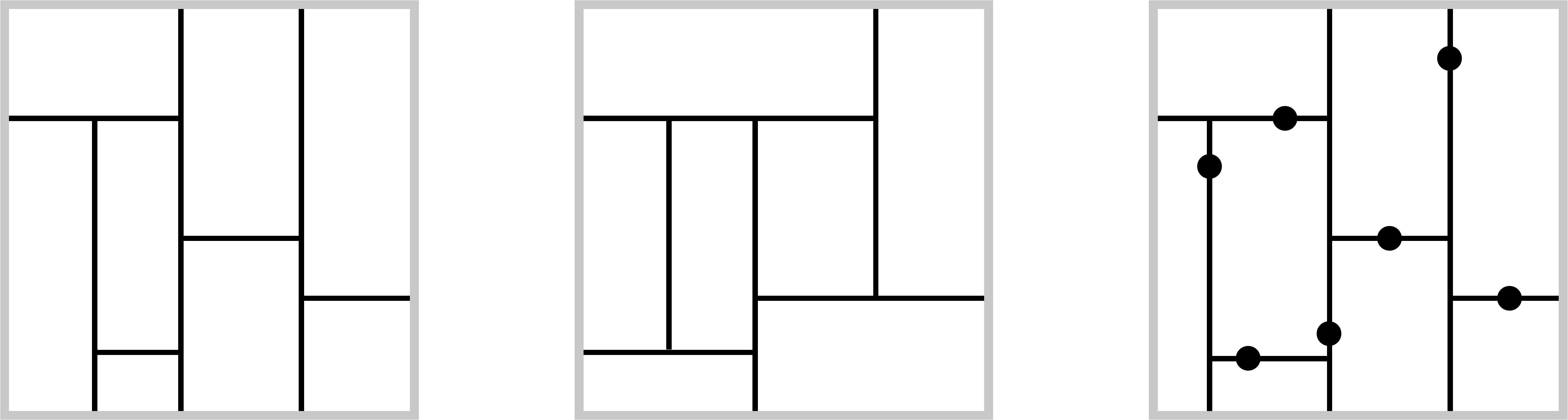}
\caption{Rectangulations. From left to right: (i) slicing rectangulation, (ii) non-slicing rectangulation, (iii) rectangulation constrained by points.\label{fig_r}}
\end{figure}

Several theoretical aspects of rectangulations have been studied in the combinatorics literature. In particular, it is known that the number of combinatorially different rectangulations with $n$ rectangles is given by the $n$th Baxter number~\cite{graham, baxter_oeis}, and the number of combinatorially different \emph{slicing} rectangulations with $n$ rectangles is given by the $n$th Schr\"oder number~\cite{graham, schroder_oeis}. Explicit bijections from rectangulations (general, respectively, slicing) have been given to natural classes of permutations counted by Baxter, respectively, Schr\"oder numbers~\cite{bijection}. Felsner~\cite{felsner} considers various ways in which rectangulations can represent certain classes of (planar) graphs.

Most relevant to our work is the recent paper of Ackerman et al.~\cite{Ackerman} that studies certain local operations (called \emph{flip} and \emph{rotate}) that transform one rectangulation into another. Following the definition of Ackerman et al.~\cite{Ackerman} we additionally constrain a rectangulation with a set $P$ of points (no two points on the same vertical or horizontal line), requiring that every point in $P$ is contained in the interior of a segment of the rectangulation (Figure~\ref{fig_r}).
Rectangulations constrained by points have also received attention in the literature (we refer to~\cite{Ackerman} and references therein). The flip and rotate operations in rectangulations were first introduced in~\cite{Fliporig}.

Ackerman et al.~\cite{Ackerman} study sequences of rectangulations constrained by the same set $P$ of points. In particular, they are interested in the \emph{flip diameter} of rectangulations, i.e.\ the maximum number of flip and rotate operations that may be required to transform one rectangulation into another. 

\paragraph{Our results.}
Our main result is that the problem of finding short sequences of flips between rectangulations is, in a precise sense, equivalent with the problem of finding short sequences of rotations in binary search trees for serving a sequence of search queries (i.e.\ the standard BST problem). We refer to \textsection\,\ref{sec2} for a precise statement of results. 

The connection between the BST problem and rectangulations is as follows. The sequence of search queries for a BST is mapped to a set of points in the plane -- these are the points constraining the rectangulations. The source and target rectangulations are the two that are (intuitively) furthest apart: the rectangulation consisting of only vertical lines and the rectangulation consisting of only horizontal lines.
Every sequence of flips that transforms the source rectangulation into the target rectangulation encodes the execution trace of a BST algorithm serving the given sequence of queries. Conversely, every sequence of BST re-arrangements that serves the query sequence encodes a valid sequence of flips from the all-vertical to the all-horizontal rectangulation.

This new ``flip-view'' of the BST problem allows yet another reinterpretation of known concepts from the BST world. In particular, in this model, the Greedy algorithm appears as one of many natural strategies. 

In flip-view, the BST rotation sequence emerges in an order that is different from both the temporal ordering of the input, and the spatial ordering of the keys -- the flip-sequence ``constructs'' the BST solution in an order that is constrained by the internal structure of the query-sequence. Since the flip-sequence has a clear goal (the all-horizontal rectangulation), there is also a clear sense of progress, which (we hope) makes this view more amenable for algorithm-design. As the online/offline distinction is less obvious here than in the geometric view of DHIKP, we find it possible that our new model is most suitable for the design and analysis of \emph{offline} BST algorithms. We use this new view of the BST problem to make some preliminary observations.

We further transform the obtained model, to arrive at a particularly simple formulation of the BST problem, as a problem resembling edge relaxations in a shortest path tree. 
This view appears even more algorithm-friendly than the other, as it makes a certain recursive structure of the problem apparent.  
The interpretation of Greedy in this model is simple and natural. We believe this model to give additional insight about the Greedy algorithm, and we hope that it will facilitate both the analysis of Greedy, and the design of new algorithms. We briefly explore these topics in \textsection\,\ref{sec3}.

Our definition of the flip operation is slightly different from the definition of Ackerman et al.~\cite{Ackerman}. Nevertheless, there is a clear relation between the two definitions, which allows us to answer an open question raised by Ackerman et al.\ concerning the flip diameter of rectangulations. The equivalence also leads to a simplified proof of a result shown by Ackerman et al., which arises now as an immediate corollary of known results for the BST problem. We explore this topic in \textsection\,\ref{sec_rect}.

\paragraph{Small Manhattan Network.}
The problem of connecting a given set of points in the plane by a manhattan network is a classical network design problem that has received significant attention. The variant which we consider here was studied by Gudmundsson, Klein, Knauer, and Smid (GKKS)~\cite{manhattan}. In this problem, the goal is to minimize the number of additional points added to a point set, in order to connect all original points with manhattan paths. Such a construction is also known as an $L_1$-spanner. (We give more precise definitions in \textsection\,\ref{sec2}.) This problem has a known connection to the BST problem, described by Harmon~\cite{Harmon}: The size of the optimum Small Manhattan Network solution is a lower bound for the optimum cost of serving a BST access sequence. Moreover, this lower bound is efficiently computable (a constant-approximate solution for Small Manhattan Network can be computed in polynomial time~\cite{Harmon, DHIKP09}). 

This connection seems not widely known. In particular, using this connection, some of the results of GKKS arise as corollaries of known facts about BSTs. We make this connection explicit, and we formulate further properties of small manhattan networks, following from recent results for BSTs~\cite{FOCS15}. We also interpret small manhattan networks in the rectangulation flip-view outlined earlier. The gap between the BST optimum and the Small Manhattan Network optimum is mysterious -- it is a long-standing conjecture that the two quantities are asymptotically the same. We are not yet able to settle this conjecture, but we believe that our new model of the manhattan network problem gives additional insight about its relation with the BST problem. We explore this topic in \textsection\,\ref{sgr} and \textsection\,\ref{sec_manh}.

\paragraph{Further related work.}

Binary search trees are counted by the Catalan numbers, therefore, they are in bijection with (literally) hundreds of known combinatorial structures that are similarly counted by the Catalan numbers~\cite{stanley_book}. The results of our current paper are (as far as we see) unrelated to these correspondences -- instead of a single BST, we study \emph{sequences of rotations} in a BST that serve a given sequence of queries. 
  
There exist known connections between rectangulations and BSTs. In particular, slicing rectangulations have a straightforward BST-representation, which is useful in geometric applications such as planar point location. For general rectagulations more complex BST-based representations are known, such as the twin binary tree structure given by Yao, Chen, Cheng, and Graham~\cite{graham}. The connection described in our paper is, again, very different from such results. We do, in fact, relate a sequence of rectangulations with a sequence of BSTs. However, in our model, the intermediate elements in the two sequences are not in direct correspondence with each other. An intermediate rectangulation in our sequence corresponds to an abstract state of a BST algorithm, in which some partial structure of the intermediate trees has been committed to, while other structure is still left undecided. We find it an intriguing question, whether the known BST-based representations of rectangulations have any relevance to the connections introduced in our current paper.

\section{The main equivalences} \label{sec2}

\subsection{Binary search tree (BST) problem} Let $[n]=\{1,\dots,n\}$, and let $X = (x_1, \dots, x_m) \in [n]^m$ be an \emph{access sequence}. We view $X$ at the same time as a collection of points in the plane in a straightforward way: $X = \{(x_i,i) : 1 \leq i \leq m\}$. For any point $p$, we denote by $p.x$ and $p.y$ the $x$-coordinate and the $y$-coordinate of $p$ respectively.

A BST algorithm $\Ac$ reads the sequence $X$, and outputs an initial binary search tree with nodes $[n]$, and a sequence of operations \emph{moveleft}, \emph{moveright}, \emph{moveup}, \emph{rotate}. 
We say that $\Ac$ \emph{serves} $X$, if the initial tree and the sequence of operations encode a valid sequence of pointer-moves and rotations-at-the-pointer, such that each element $x_i$ is in turn moved to the root. The pointer starts at the root of the initial tree. The \emph{cost} of $\Ac$ serving $X$, denoted $\Ac(X)$ is the number of operations output by $\Ac$. 
This cost model follows the description of Wilber~\cite{Wilber}, and is equivalent within a constant factor with several other descriptions of the BST model. In particular, it is easy to see that requiring the accessed element to become the root entails only a constant factor change in the cost.

We say that $\Ac$ is \emph{offline}, if it has access to the entire input $X$ at once, and we say that $\Ac$ is \emph{online}, if it reads the input $X$ one element at a time. After reading $x_i$, for all $i$, an online algorithm outputs a sequence of operations that bring  $x_i$ to the root. 

In the following, $X$ is always a permutation, i.e.\ $m=n$, and $x_i \neq x_j$, for all $i,j$. We call the corresponding point set a \emph{permutation point set}. It is known~\cite{DHIKP09} that an offline BST algorithm $\Ac$ that serves only permutations as input can be transformed with a constant factor slowdown into an offline algorithm $\Ac'$ that serves arbitrary sequences as input. (The statement also holds for online algorithms~\cite{FOCS15} under mild conditions on $\Ac$.)

\subsection{Satisfied Superset problem} \label{sec:ssp} This problem is defined by DHIKP~\cite{DHIKP09}. A point set $Y \in [n] \times [n]$ is \emph{satisfied}\footnote{The term used by DHIKP is \emph{arborally satisfied}.}, if for any two points $a,b \in Y$, one of the following holds: (i) $a$ and $b$ are on the same horizontal or vertical line, or (ii) the rectangle with corners $a$ and $b$ contains some point in $Y \setminus \{a,b\}$, possibly on the boundary of the rectangle.

Given a permutation point set $X$ of size $n$, an algorithm $\Ac$ for the Satisfied Superset problem outputs a point set $Y$, with $X \subseteq Y \subseteq [n] \times [n]$, such that $Y$ is satisfied. The cost of $\Ac$ is the size of the set $Y$, denoted $\Ac(X)$.

\begin{theorem}[DHIKP~\cite{DHIKP09}] \label{lem1}
Any algorithm $\Ac$ for the Satisfied Superset problem can be transformed (in polynomial time) into an algorithm $\Ac'$ for the BST problem, such that for all inputs $X$ we have $\Ac'(X) = \Theta\big(\Ac(X)\big)$. Furthermore, any algorithm $\Ac'$ for the BST problem can be transformed (in polynomial time) into an algorithm $\Ac$ for the Satisfied Superset problem, such that for all inputs $X$ we have $\Ac(X) = \Theta\big(\Ac'(X)\big)$.
\end{theorem}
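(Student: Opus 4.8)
The plan is to prove the two directions of the equivalence separately, in each case by exhibiting an explicit transformation and certifying the $\Theta(\cdot)$ bound with a structural invariant. \textbf{From a BST algorithm to a Satisfied Superset algorithm.} First I would normalize the given algorithm $\Ac'$, at the cost of only a constant factor, so that while serving $x_i$ the pointer visits a node set $U_i$ that forms a connected subtree containing the root both in the tree $T_{i-1}$ just before the access and in the tree $T_i$ just after it, with $T_{i-1}$ and $T_i$ agreeing on every pendant subtree hanging off $U_i$, and with each node visited $O(1)$ times --- this is the standard ``reconfiguration'' view of rotate-to-root. I would then output $Y = \bigcup_i U_i \times \{i\}$, so that $X \subseteq Y$ (since $x_i \in U_i$) and $|Y| = \sum_i |U_i| = \Theta(\Ac'(X))$. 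The remaining work is to check that $Y$ is satisfied: given $(a,i),(b,j) \in Y$ with $i<j$ and, say, $a<b$, if some reconfiguration at a time strictly between $i$ and $j$ touches a key in the interval $[a,b]$ then that point of $Y$ already lies in the rectangle; otherwise the ancestor order restricted to keys in $[a,b]$ is one fixed relation $\prec$ throughout $T_i,\dots,T_{j-1}$, and I would split into three cases. If $a$ is a $\prec$-ancestor of $b$, then serving $x_j$ traverses the root-to-$b$ path of $T_{j-1}$ and hence $a \in U_j$; if $b$ is a $\prec$-ancestor of $a$, then visiting $a$ at time $i$ forces the root-to-$a$ path of $T_i$, in particular $b$, into $U_i$; and if $a,b$ are $\prec$-incomparable, their lowest common ancestor $r$ satisfies $a<r<b$ and lies on the root-to-$a$ path of $T_i$, so $r \in U_i$. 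In every case the resulting point of $Y$ is a witness inside the rectangle.

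\textbf{From a Satisfied Superset algorithm to a BST algorithm.} Given a satisfied superset $Y \supseteq X$, I would write $S_i = \{v : (v,i) \in Y\}$ (so $x_i \in S_i$) and construct a sequence of trees $T_0, T_1, \dots$ together with a BST execution serving $X$ whose touched set $U_i$ at step $i$ has size $O(|S_i|)$, yielding total cost $O(|Y|)$. I would define $T_i$ from the restriction of $Y$ to times $\le i$ by a ``recency'' rule --- more recently touched keys lie closer to the root, with ties among keys last touched at the same time broken according to the geometry of $Y$ --- arranged so that the key accessed at time $i$ becomes the root of $T_i$. Then, using only the satisfiedness of $Y$, I would verify: (i) $S_i$ together with a short path up to the root of $T_{i-1}$ forms a connected root-containing subtree of $T_{i-1}$, so the access at step $i$ can be performed by pointer moves and rotations confined to this set; (ii) applying rotate-to-root to $x_i$ within this set turns $T_{i-1}$ into exactly $T_i$; and (iii) summed over $i$, the root-connecting paths contribute only $O(|Y|)$. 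Since $x_i \in S_i$ is touched, the execution serves $X$.

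\textbf{Main obstacle.} The first direction is routine once the reconfiguration normalization is available. The crux is the second direction: defining the trees $T_i$ coherently from $Y$ and proving that ``satisfied'' is \emph{exactly} the local-consistency condition under which the rows of $Y$ form a legal sequence of BST reconfigurations. I expect the hardest part to be establishing the connected-subtree property of each $S_i$ and, especially, bounding the total length of the root-connecting paths by $O(|Y|)$ rather than something larger; this is precisely where the non-monotone ``satisfied'' predicate must be used in full strength.
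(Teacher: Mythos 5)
The paper does not prove this theorem --- it is cited from DHIKP~\cite{DHIKP09} --- so I will compare your proposal against the known proof there.

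Your first direction (BST $\Rightarrow$ Satisfied Superset) is essentially DHIKP's: normalize to the ``connected root-containing reconfiguration set'' view, output the touched sets row by row, and certify satisfiedness by the three-case LCA argument. That part is fine (the wording ``serving $x_j$'' should be ``touching $b$ at time $j$'', since $b$ need not equal $x_j$, but the reasoning via ancestors in the connected touched set is sound).

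The second direction has a genuine gap, and it is exactly where you flag uncertainty. You propose defining $T_i$ by a \emph{recency} rule (more recently touched keys closer to the root). DHIKP do the opposite: $T_i$ is the \emph{next-touch treap}, a BST on $[n]$ in which heap priority is the \emph{next} time $>i$ the key is touched in $Y$ (ties broken arbitrarily but consistently). This choice of direction is not cosmetic. With the next-touch treap, the root of $T_{i-1}$ is a key whose next touch time is the minimum over all keys; since $x_i\in S_i$, that minimum is exactly $i$, so the root of $T_{i-1}$ is already in $S_i$. DHIKP then show, using satisfiedness, that all of $S_i$ forms a connected root-containing subtree of $T_{i-1}$, and that rearranging it to respect the new (post-time-$i$) next-touch priorities yields $T_i$; the touched set at step $i$ is \emph{exactly} $S_i$, so the cost is $\sum_i |S_i| = |Y|$ with no slack. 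Under your recency rule, the root of $T_{i-1}$ is the key last touched at time $i-1$, which need not appear in $S_i$ (satisfiedness guarantees only \emph{some} witness in the rectangle between $(x_{i-1},i-1)$ and $(x_i,i)$, not the specific point $(x_{i-1},i)$). Hence your extra ``short path up to the root,'' whose total length you then have to bound by $O(|Y|)$ --- and there is no evident reason this holds for an arbitrary satisfied $Y$. You correctly identified the connected-subtree property as the crux, but picked the backward-looking priority that manufactures the obstacle; the forward-looking priority eliminates it and makes the cost accounting exact.
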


More strongly, DHIKP show that there is a one-to-one correspondence between the points in the Satisfied Superset solution and the nodes of the tree that are touched by rotations at any given time in the BST solution.

A \emph{manhattan path} of length $k$ between two points $x,y \in A$ with respect to $B$, where $B \supseteq A$, is a sequence of distinct points $(x=x_1, x_2, \dots, x_k=y) \in B^k$, such that for all $i = 1,\dots,k-1$ the two neighboring points $x_i, x_{i+1}$ are on the same horizontal or vertical line, 
and both the $x$-coordinates and the $y$-coordinates of $(x_1,\dots,x_k)$ form a monotone sequence.

An alternative definition of a \emph{satisfied point set} given by Harmon~\cite{Harmon} is the following: 
\begin{proposition}[Harmon~\cite{Harmon}] \label{prop: sat and manhattan}
	$Y \in [n] \times [n]$ is satisfied, if for any two points $a,b \in Y$, there is a manhattan path between $a$ and $b$ with respect to $Y$.
\end{proposition}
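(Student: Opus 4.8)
The plan is to prove the two implications separately. The forward direction (satisfied implies manhattan-path-connected) is the substantive one; the reverse is almost immediate.

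First I would handle the easy direction: suppose that for every pair $a,b \in Y$ there is a manhattan path between them with respect to $Y$. Take any two points $a,b \in Y$. If they share a horizontal or vertical line, condition (i) of the definition of satisfied holds and we are done. Otherwise, the rectangle with corners $a$ and $b$ is nondegenerate. Consider a manhattan path $(a = x_1, x_2, \dots, x_k = b)$; since $a,b$ are not colinear (axis-parallel sense) we have $k \geq 3$, so there is an interior vertex $x_2$. Because both the sequence of $x$-coordinates and the sequence of $y$-coordinates along the path are monotone and run from $a$'s coordinates to $b$'s coordinates, every $x_j$ satisfies $\min(a.x,b.x) \leq x_j.x \leq \max(a.x,b.x)$ and likewise for the $y$-coordinate; hence $x_2$ lies in the closed rectangle with corners $a$ and $b$, and $x_2 \notin \{a,b\}$ since the path points are distinct. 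So condition (ii) holds, and $Y$ is satisfied.

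For the forward direction, assume $Y$ is satisfied and fix $a,b \in Y$; I want to build a manhattan path between them with respect to $Y$. The natural approach is induction on the number of points of $Y$ strictly inside the closed rectangle $R$ with corners $a$ and $b$ (or on the semiperimeter / area of $R$ measured in grid units, whichever gives the cleanest decrease). Base case: if $a,b$ are on a common axis-parallel line, $(a,b)$ is itself a manhattan path of length $2$. Otherwise, since $Y$ is satisfied and $a,b$ are not colinear, there is a point $c \in Y \setminus \{a,b\}$ in $R$ (possibly on the boundary). If $c$ is a corner of $R$ other than $a$ or $b$ — i.e. $c$ shares a coordinate with $a$ and the other coordinate with $b$ — then $(a,c,b)$ is already a valid manhattan path. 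In general $c$ lies somewhere in $R$; I would recurse on the pair $(a,c)$ and the pair $(c,b)$, whose rectangles $R_1, R_2$ are both contained in $R$. The key point is that the recursion parameter strictly decreases: $R_1$ and $R_2$ each contain strictly fewer interior points of $Y$ than $R$ does, because $c$ itself is interior to $R$ but lies on the boundary of (the relevant part of) $R_1$ and $R_2$ — so at least one point, namely $c$, is "used up" — and no new points are introduced. By induction we obtain a manhattan path $\pi_1$ from $a$ to $c$ and $\pi_2$ from $c$ to $b$, each within its subrectangle.

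The main obstacle is the concatenation step: combining $\pi_1$ and $\pi_2$ at $c$ into a single manhattan path from $a$ to $b$. Concatenation is not automatic — the combined sequence could repeat a point, or the monotonicity of coordinates could fail at the junction, or a vertex of $\pi_1$ might coincide with a vertex of $\pi_2$ other than $c$. I would address this by choosing $c$ carefully: pick $c$ to be a point of $Y \cap R$ that is "extremal" in a suitable sense — for instance, among all points of $Y$ in $R$ other than $a,b$, choose one whose coordinates are componentwise closest to, say, the corner of $R$ that is adjacent to $a$ along one side, or more robustly, choose $c$ to lie on the boundary of $R$ if possible (e.g. on the vertical line through $a$ or the horizontal line through $b$), falling back to an interior point only when the boundary is empty. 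A clean way to force this: first check whether $Y$ contains a point on the segment of $R$'s boundary emanating from $a$; if so that point reduces one dimension of the rectangle by a "straight move" and we recurse once in a strictly smaller rectangle; if not, the satisfied property still gives an interior point, and one argues that any such $c$ splits $R$ into two rectangles each with strictly fewer $Y$-points, and that the monotone-coordinate condition is preserved across the concatenation precisely because $\pi_1 \subseteq R_1$, $\pi_2 \subseteq R_2$, and $R_1, R_2$ meet only at $c$ with coordinates "sandwiched" correctly between those of $a$ and $b$. Distinctness of the concatenated point list follows since $R_1 \cap R_2$ (as a coordinate box) forces any shared point to equal $c$. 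Making this extremal choice precise, and verifying that the concatenated sequence is genuinely a sequence of \emph{distinct} points with both coordinate-sequences monotone, is the only delicate part; everything else is bookkeeping.
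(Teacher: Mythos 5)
The paper itself offers no proof of this proposition — it only remarks ``Verifying the equivalence of this definition with the previous one is an easy exercise'' — so there is no paper argument to compare against. Your reverse direction is correct as written. Your forward direction is also correct, but you worry too much about the concatenation step: no extremal choice of $c$ is needed, and the argument closes cleanly with an \emph{arbitrary} $c \in Y\cap R\setminus\{a,b\}$. Concretely, assume w.l.o.g.\ $a.x<b.x$ and $a.y<b.y$, and let $R_1=[a.x,c.x]\times[a.y,c.y]$, $R_2=[c.x,b.x]\times[c.y,b.y]$. Then (i) $R_1\cap R_2=\{c\}$ because $a.x\le c.x\le b.x$ forces $[a.x,c.x]\cap[c.x,b.x]=\{c.x\}$ and likewise in $y$, so the concatenated vertex list is automatically distinct; (ii) both coordinate sequences are nondecreasing along $\pi_1$ (from $a$ to $c$) and along $\pi_2$ (from $c$ to $b$), hence nondecreasing along the concatenation, so monotonicity at the junction cannot fail; and (iii) the induction measure $|Y\cap R|$ strictly decreases for both subproblems, since $b\notin R_1$ and $a\notin R_2$ (as $c\ne a,b$). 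Your closing sentence already records the key fact $R_1\cap R_2=\{c\}$; the preceding hedging about boundary points, componentwise-closest corners, and ``straight moves'' can simply be deleted. One small bookkeeping note: the induction should be phrased on the number of points of $Y$ in the \emph{closed} rectangle $R$, not ``strictly inside'' it, which is what actually decreases.
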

Verifying the equivalence of this definition with the previous one is an easy exercise. 

\newpage
\subsection{Rectangulation problem} \label{sec_r_intro}
Variants of this problem have been studied in the literature. The formulation we describe here is new, but closely related to the problem studied by Ackerman et al.~\cite{Ackerman}. The exact difference between our model and the model of Ackerman et al., and the implications of this difference are explored in \textsection\,\ref{sec_rect}. 

Let $n$ be an arbitrary integer (the problem size). We define the set of planar points $S = \{0,1,\dots,n+1\} \times \{0,1,\dots,n+1\}$. Points in $C = \big\{ (0,0)$, $(0,n+1)$, $(n+1,0)$, $(n+1,n+1) \big\}$ are called \emph{corner points} and will not be used in any way. Points in $M = \big\{(i,0), (i,n+1), (0,i), (n+1,i) : i \in [n] \big\}$ are called \emph{margin points}.  The remaining points (i.e.\ those in $S \setminus (M \cup C) = [n] \times [n]$) are called \emph{non-margin points}.

A \emph{state} $(P,L)$ of the Rectangulation problem consists of a set $P \subseteq (S \setminus C)$ of points and a set $L$ of horizontal and vertical line segments (in the following, simply segments) with endpoints in $P$. \\

A state $(P,L)$ is \emph{valid} iff it fulfills the following conditions (see Figure~\ref{fig:flip_view}): 
\begin{compactenum}[(i)]
\item Each segment in $L$ contains exactly two points from $P$, namely its two endpoints. \\ (This implies that no point from $P$ is in the interior of a segment in $L$.)
\item No two segments in $L$ intersect each other (except possibly at endpoints). 
\item $(P,L)$ is $\emph{elbow-free}$. This means that each non-margin point in $P$ is contained in at least two segments of $L$, and if it is contained in exactly two segments, then they must have the same orientation (i.e.\ either both vertical or both horizontal).
\end{compactenum}\ \\

\vspace{-0.2in}

\noindent The \emph{initial state} $(P_0, L_0)$ is defined with respect to an input permutation point set $X$ of size $n$. The set $P_0$ is equal to $X \cup M$. The set $L_0$ contains for each non-margin point $(x,y) \in P_0 \setminus M$ two vertical segments: the one between $(x,0)$ and $(x,y)$, and the one between $(x,y)$ and $(x,n+1)$. It is easy to see that the initial state is valid.

An \emph{end state} $(P^*, L^*)$ is a valid state that consists of a point set $P^* \supseteq P_0$, and a set of segments $L^*$, all of them horizontal, such that they cover every point $\{0,1,\dots,n+1\} \times [n]$. See Figure~\ref{fig:flip_view} for illustration.

Given a permutation point set $X$, an algorithm $\Ac$ for the Rectangulation problem transforms the initial state $(P_0,L_0)$ determined by $X$ into an end state $(P^*, L^*)$, through a sequence of \emph{valid flips}, defined below. The cost of the algorithm, denoted $\Ac(X)$, is the number of flips in this sequence. In words, the goal is to go from the all-vertical state to the all-horizontal state through a minimum number of valid flips.

Let $(P,L)$ be a valid state. Two points $a,b \in S \setminus C$ define a flip, denoted $\left<a,b\right>$. A flip $\left<a,b\right>$ transforms the state $(P,L)$ into a new state $(P',L')$ as follows. First, we let $P' = P \cup \{a,b\}$, and $L' = L \cup \{ [a,b] \}$. If there exists some segment $[x,y] \in L$ that contains $a$ in its interior, we remove $[x,y]$ from $L'$, and add the segments $[x,a]$, and $[a,y]$ to $L'$. Similarly, if there exists some segment $[z,t] \in L$ that contains $b$ in its interior, we remove $[z,t]$ from $L'$, and add the segments $[z,b]$, and $[b,t]$ to $L'$.

For $\left<a,b\right>$ to be a valid flip, it must hold that the resulting state $(P',L')$ is a valid state. In particular, we can only add a segment $[a,b]$ if it is horizontal or vertical, and if it does not intersect existing segments (except at $a$ or $b$). 
After every flip we can remove from $L'$ an arbitrary number of segments. By removing segments we must not violate the elbow-free property. For instance, we can only remove a vertical segment if its non-margin endpoints are contained in two horizontal segments of $L'$ (in other words, the endpoints have extensions both to the left and to the right).

\begin{figure}[h]
\centering
\includegraphics[width=0.95\textwidth]{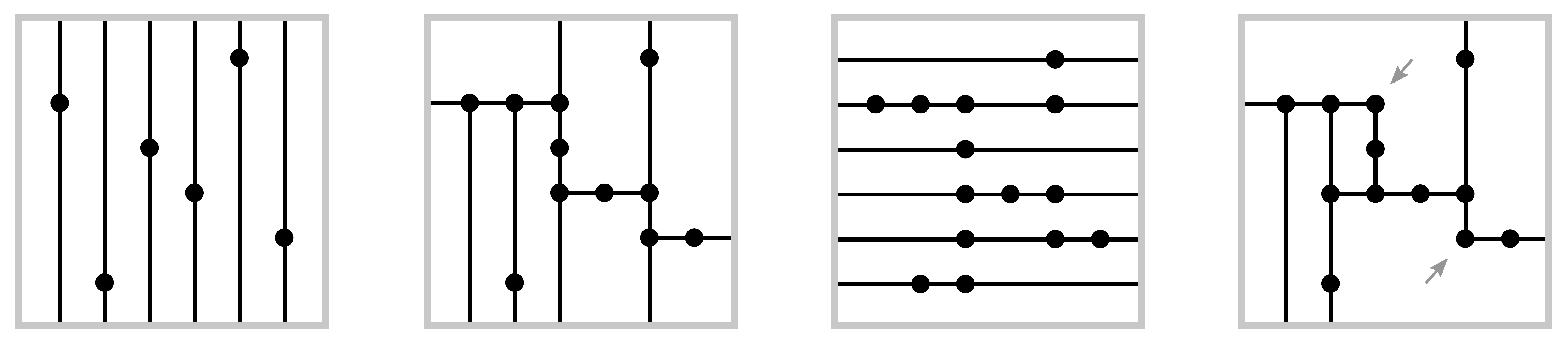}
\caption{Rectangulation problem. From left to right: (i) initial state corresponding to $X = (2,6,4,3,1,5)$, (ii) a valid intermediate state, (iii) a valid end state, (iv) an invalid intermediate state (observe that the state is not elbow-free). Margin points are not shown.\label{fig:flip_view}}
\end{figure}

\begin{figure}[h]
\centering
\includegraphics[width=0.6\textwidth]{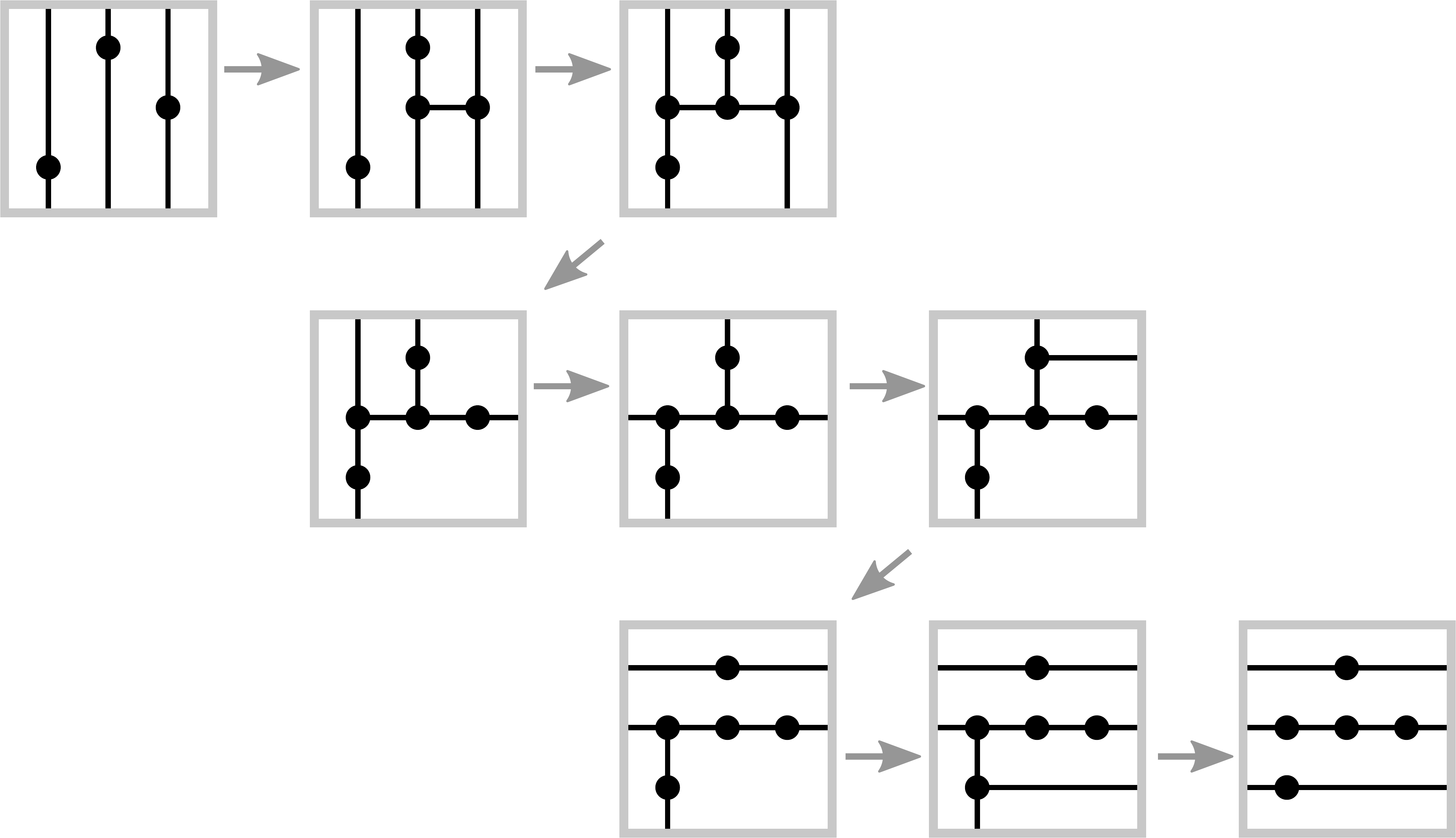}
\caption{A sequence of valid flips from initial state to end state. Margin points are not shown. \label{fig:flips}}
\end{figure}

The following theorem captures the connection (in one direction) between Rectangulation and Satisfied Superset. 

\begin{theorem}\label{thm1}
Any algorithm $\Ac$ for the Rectangulation problem can be transformed (in polynomial time) into an algorithm $\Ac'$ for the Satisfied Superset problem, such that for all inputs $X$, we have $\Ac'(X) = O\big(\Ac(X)\big)$.
\end{theorem}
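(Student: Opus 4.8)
The plan is to turn a flip‑sequence of cost $\Ac(X)$ ending in a state $(P^*,L^*)$ into the point set $Y:=P^*\cap([n]\times[n])$; equivalently $Y$ is $X$ together with every non‑margin point that occurs as an endpoint of some flip, and the algorithm $\Ac'$ simply runs $\Ac$ and extracts $Y$, which is clearly polynomial. Since every flip enlarges the point set by at most two points, $|Y|\le |X|+2\Ac(X)=n+2\Ac(X)$, so it suffices to show $\Ac(X)=\Omega(n)$. For this I would count horizontal segments: in the end state, row $y\in[n]$ contains the unique input point $(x_y,y)$, and validity condition (i) forbids a point of $P^*$ in the interior of a segment, so the segments covering row $y$ cannot pass through $(x_y,y)$; hence row $y$ needs at least two horizontal segments, and $L^*$ has at least $2n$ of them. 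On the other hand $L_0$ has no horizontal segments, and a single flip creates only $O(1)$ new horizontal segments (exactly one if the flip‑segment is horizontal; at most four if it is vertical, coming from splitting at most two pre‑existing horizontal segments at the two flip endpoints). As every horizontal segment of $L^*$ was created by some flip, $\Ac(X)=\Omega(n)$, so $\Ac'(X)=|Y|=O(\Ac(X))$.

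It remains to show $Y$ is satisfied. Fix two non‑collinear points $a,b\in Y$; the closed rectangle $\rho$ they span lies in $[1,n]\times[1,n]$, which contains no margin or corner point, so it suffices to exhibit a point of $P^*$ inside $\rho$ other than $a$ and $b$. After reflecting horizontally and/or vertically — operations that preserve the initial state, the end state, and the validity of flips, and hence the cost — we may assume $a$ is the bottom‑left and $b$ the top‑right corner of $\rho$. Since $a$ is non‑margin and $L^*$ is all‑horizontal, elbow‑freeness makes $a$ the left endpoint of some $[a,a']\in L^*$; if $a'_x\le b_x$ then $a'$ is already the point we want, so we may assume $a'_x>b_x$, which forces $[a,a']$ to pass through the interior point $(b_x,a_y)$, whence $(b_x,a_y)\notin P^*$. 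Symmetrically $b$ is the right endpoint of some $[b'',b]\in L^*$ with $b''_x<a_x$, and $(a_x,b_y)\notin P^*$.

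The heart of the argument is a tracking argument over the flip sequence. The columns $x=a_x$ and $x=b_x$ are both columns of $[n]$, hence each carries an input point, so in $L_0$ each of them is covered entirely by two vertical segments; in particular $(b_x,a_y)$ and $(a_x,b_y)$ are vertically covered at time $0$. To create $[a,a']$ at a later flip the crossing point $(b_x,a_y)$ must be uncovered by verticals at that moment (a new segment may not cross an existing one), and since $(b_x,a_y)$ never enters $P$ it is never a split point; hence some vertical sub‑segment $\tau$ of column $b_x$ containing $(b_x,a_y)$ in its interior gets removed, and by the removal precondition the non‑margin endpoints of $\tau$ are flanked by horizontal segments. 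In particular $\tau$ has an endpoint $(b_x,q)$ with $q>a_y$: if $q<b_y$ this endpoint already lies in the interior of $\rho$ and we are done; the remaining case is $q\ge b_y$, and combining this with the symmetric deduction from column $a_x$ together with the relative order in which $[a,a']$ and $[b'',b]$ are created forces a point of $P^*$ into $\rho$, a contradiction.

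The main obstacle is exactly this last deduction: the flip operation is very permissive — segments may be removed at any time, and $a,b$ themselves need not be input points but may be created mid‑sequence — so the bookkeeping interleaving columns $a_x,b_x$ and rows $a_y,b_y$ must be carried out carefully, and it would be far cleaner to replace the ad hoc case analysis by a monovariant over the flip sequence that certifies satisfiedness directly; identifying such a monovariant is, I expect, the real content of the proof. A secondary technicality is the treatment of boundary rows and columns (when $a_x$, $b_x$, $a_y$, or $b_y$ equals $1$ or $n$), where a flanking endpoint of $\tau$ may happen to be a margin point; this does not affect the conclusion but has to be checked separately.
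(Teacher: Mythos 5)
Your construction of $Y$ and the cost bound are correct, and you handle the additive $|X|=n$ term more carefully than the paper, which treats $\Ac(X)=\Omega(n)$ as immediate. The gap, which you flag yourself, is the satisfiedness argument: tracking when the vertical cover at $(b_x,a_y)$ and $(a_x,b_y)$ gets removed leads to a case analysis you cannot close, and you leave the key step as a conjecture about a ``monovariant.''

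The paper closes the argument with a choice of two flips and a single snapshot, not a monovariant, and this is what you are missing. In the paper's orientation ($a$ above-left, $b$ below-right), take the \emph{last} flip $\left<a,a'\right>$ that gives $a$ a rightward horizontal segment and the \emph{last} flip $\left<b',b\right>$ that gives $b$ a leftward one; emptiness of the rectangle forces $a'$ to lie past $b$'s column and $b'$ past $a$'s column (this is the same geometric fact underlying your observation that $(b_x,a_y),(a_x,b_y)\notin P^*$). WLOG $\left<b',b\right>$ occurs first; look at the state just before $\left<a,a'\right>$. There $a$ carries no rightward horizontal segment (a new $[a,a']$ would overlap it), so the elbow-free condition forces a vertical segment hanging \emph{below} $a$; its lower endpoint must lie strictly below $b$'s row, since otherwise that endpoint is a point of $Y$ inside the closed rectangle. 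But $[b',b]$ is already present --- its flip was earlier, and by ``lastness'' $b$'s leftward horizontal segment persists to the end --- and it spans $b$'s row from left of $a$'s column to right of it. Hence the vertical segment at $a$ crosses $[b',b]$ transversally at $(a_x,b_y)$, contradicting validity. Your removal-tracking approach is not wrong in spirit, but removals can recur and interleave in ways that make the bookkeeping at $(b_x,a_y)$ and $(a_x,b_y)$ genuinely messy; picking the \emph{later} of the two extension flips collapses all of that into one forced crossing.
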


\begin{proof}
Consider an algorithm $\Ac$ for Rectangulation, executed from initial state $(P_0, L_0)$, defined by an input permutation $X$. As we run $\Ac$, we construct a set $Y \supseteq X$, that is a solution for Satisfied Superset (this is our new algorithm $\Ac'$). The process is straightforward: Initially we let $Y=X$. Whenever $\Ac$ performs a flip $\left<a,b\right>$, we let $Y = Y \cup (\{a,b\} \setminus M)$. In words, we construct a superset of $X$ by adding every non-margin endpoint created while flipping from the all-vertical to the all-horizontal state in Rectangulation. The cost of $\Ac$ is equal to the number of flips. Since each flip adds at most two points to $Y$, the claim on the cost of $\Ac'$ is immediate.

It remains to show that the point set $Y$ thus constructed is satisfied. Suppose otherwise, that in the end there are two points $a,b \in Y$, that are not on the same horizontal or vertical line, and the rectangle with corners $a,b$ contains no other point of $Y$. Without loss of generality, assume that $a$ is above and to the left of $b$. Let $\left<a,a'\right>$ be the last flip in the execution of $\Ac$ such that $a'$ is on the same horizontal line as $a$ and to the right of $a$. Let $\left<b',b\right>$ be the last flip such that $b'$ is on the same horizontal line as $b$ and to the left of $b$. 
(There have to be such flips, otherwise $\Ac$ would not produce a valid end state.) Since the rectangle with corners $a,b$ is empty, $b'$ must be to the left of $a$, and $a'$ must be to the right of $b$. 

Suppose that the flip $\left<b',b\right>$ occurs earlier than the flip $\left<a,a'\right>$ (the other case is symmetric), and consider the state before the flip $\left<a,a'\right>$. In that state there must be a vertical segment with top endpoint at $a$, otherwise $a$ would be contained in at most two segments, not both horizontal or vertical, contradicting the elbow-free property. 
Let $a^*$ be the bottom endpoint of the vertical segment with top endpoint $a$. The point $a^*$ must be strictly below $b$, for otherwise the rectangle with corners $a,b$ would contain it. This means that $[a,a^*]$ intersects $[b',b]$, contradicting that we are in a valid state. 
We conclude that $Y$ is a satisfied superset of $X$. \qedd
\end{proof}

The following converse of Theorem~\ref{thm1} also holds.

\begin{theorem}\label{thm2}
Any algorithm $\Ac'$ for the Satisfied Superset problem can be transformed (in polynomial time) into an algorithm $\Ac$ for the Rectangulation problem, such that for all inputs $X$, we have $\Ac(X) = O\big(\Ac'(X)\big)$.
\end{theorem}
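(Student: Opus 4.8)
The plan is to reverse the construction in the proof of Theorem~\ref{thm1}: given a satisfied superset $Y$ of $X$, I want to produce a sequence of valid flips from the all-vertical initial state $(P_0, L_0)$ to some end state, using $O(|Y|)$ flips. The natural idea is to use the points of $Y$ (and the margin points) as the endpoints of horizontal segments we draw, and to process them in order of increasing $y$-coordinate. Equivalently, by Proposition~\ref{prop: sat and manhattan}, any two points of $Y$ are connected by a manhattan path within $Y$; this monotone-path structure is exactly what should let us "fill in" horizontal segments one scanline at a time without ever creating an elbow or an intersection.

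Concretely, I would sweep a horizontal line upward through the values $y = 1, 2, \dots, n$. When the sweep reaches height $y$, consider the points of $Y$ on that horizontal line, say $q_0 = (0,y), q_1, \dots, q_k, q_{k+1} = (n+1,y)$ sorted by $x$-coordinate (the endpoints being margin points). For each consecutive pair $q_{j}, q_{j+1}$ perform the flip $\langle q_j, q_{j+1}\rangle$, adding the horizontal segment $[q_j, q_{j+1}]$. Crucially, at the moment we do this, every $q_j$ with $0 < j \le k$ is a non-margin point of $Y$ that currently sits in the interior of some vertical segment (initially the vertical segments through its column, later possibly a shortened piece of one), so the flip is well-defined: it splits that vertical segment at $q_j$. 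After completing the whole scanline at height $y$, every non-margin point at height $y$ now has horizontal extensions both left and right, so we may remove the vertical sub-segments lying below height $y$ in those columns that are no longer needed, restoring/maintaining the elbow-free property. The key invariant to carry up the sweep is: after processing heights $1,\dots,y$, the state is valid, its segments at heights $\le y$ are exactly the horizontal chords just described, and every column still carries a vertical segment from height $y$ (or the highest processed point in that column) up to $n+1$, so that points at larger heights remain splittable. Counting flips: each point of $Y$ is the endpoint of $O(1)$ horizontal flips over the whole process, so the total is $O(|Y|) = O(\Ac'(X))$.

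The main obstacle I expect is verifying that each flip $\langle q_j, q_{j+1}\rangle$ is genuinely \emph{valid} — i.e.\ that the new horizontal segment $[q_j, q_{j+1}]$ does not cross any existing segment, and that no intermediate or final state violates the elbow-free condition. For the non-crossing part, the worry is a vertical segment passing strictly between columns $q_j.x$ and $q_{j+1}.x$ and crossing height $y$; this is precisely where the satisfied property of $Y$ must be invoked — if such a vertical segment existed it would have its current top endpoint at some point of $Y$ strictly between $q_j$ and $q_{j+1}$ in $x$ and at height $\ge y$, and one argues (using that $Y$ is satisfied, hence the rectangle spanned by $q_j$ and that point, or by $q_{j+1}$ and that point, is non-empty, together with the fact that we process heights bottom-up) that its top endpoint has already been "capped off" by a horizontal segment at a height $\le y$, so it does not in fact reach height $y$. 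Making this argument airtight — pinning down exactly which vertical segments survive at each stage, and ruling out elbows at the moment points are capped — is the delicate bookkeeping. Once that invariant is nailed down, the cost bound and the fact that the terminal state is a legitimate end state (all segments horizontal, covering $\{0,\dots,n+1\}\times[n]$, with all spurious verticals removed) follow directly, and combining Theorem~\ref{thm1}, Theorem~\ref{thm2}, and Theorem~\ref{lem1} yields the full equivalence between the Rectangulation problem and the BST problem up to constant factors.
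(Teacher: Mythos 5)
Your plan — reverse Theorem~\ref{thm1} by building the rectangulation out of horizontal segments with endpoints in $Y\cup M$, charging $O(1)$ flips to each point of $Y$ — is the right plan, and it is the plan the paper executes. But your concrete realization of it, a rigid bottom-to-top sweep that completes each scanline $y=1,2,\dots,n$ before moving on, is incorrect, and the place you flagged as ``delicate bookkeeping'' is exactly where it breaks.

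Here is a counterexample. Take $X=(2,3,1)$, i.e.\ the points $(2,1),(3,2),(1,3)$, and take $Y=X\cup\{(2,2),(2,3),(3,3)\}$, which you can check is satisfied. At height $y=1$ the points of $Y\cup M$ are $(0,1),(2,1),(4,1)$, and your first flip is $\langle(0,1),(2,1)\rangle$. But column $1$ still carries the initial vertical segment $[(1,0),(1,3)]$, which crosses the proposed horizontal segment at $(1,1)$; and that vertical segment cannot be removed, because its only non-margin endpoint $(1,3)$ has no horizontal extensions yet (nothing above height $0$ has been processed). So the sweep is stuck at the very first flip. Your heuristic argument that the obstructing vertical segment ``has already been capped off by a horizontal segment at a height $\le y$'' fails precisely because the witness point that satisfiedness guarantees inside the rectangle $[(1,3),(2,1)]$ is $(2,2)$ (or $(2,3)$) — a point in \emph{column $2$}, not column $1$ — so it does nothing to shorten the offending segment in column $1$. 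Satisfiedness puts a point somewhere in the rectangle, not in the column you need. (A top-to-bottom sweep fails on $X=(2,1,3)$, $Y=X\cup\{(2,2),(2,3)\}$ for the same reason.)

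The paper avoids committing to any scan order. Its algorithm $\Ac$ simply repeats two greedy steps in \emph{arbitrary} order: (1) perform any valid flip $\langle a,b\rangle$ with $a,b\in Y\cup M$, and (2) remove any removable vertical segment that contains no interior point of $Y$. The content of the proof is then a stuck-state argument: assume no operation is available, pick the ``rightmost'' missing horizontal pair $q,q'\in Y\cup M$ (rightmost $q$, then leftmost $q'$), show the blocking vertical segment $[z,z']$ has no interior $Y$-point and a left-extensible endpoint $z$, and then use the manhattan path between $z$ and $q$ (Proposition~\ref{prop: sat and manhattan}) to exhibit a point $w\in Y$ with $w.x\ge q.x$ whose segment $[w,z]$ is also missing — contradicting the extremal choice of $q$. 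If you want to salvage your write-up, replace the fixed sweep by this arbitrary-order greedy and prove non-stuckness by such an extremal/contradiction argument; the fixed-order scanline simply does not go through.
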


\begin{proof}
Consider an algorithm $\Ac'$ for Satisfied Superset that for input $X$ outputs a point set $Y \supseteq X$. We construct a sequence of flips that transform the initial state $(P_0, L_0)$ of the Rectangulation problem determined by $X$ into an end state $(P^*, L^*)$, such that $P^* \setminus M = Y$ (this is our new algorithm $\Ac$). We define $\Ac$ such that every flip creates a new horizontal segment whose endpoints are in $Y \cup M$, and no horizontal segment is ever removed during the course of the algorithm.
The claim on the cost of $\Ac$ is immediate, since each flip can be charged to one of its (non-margin) endpoints, and each point in $Y$ has at most two flips charged to it. The removal of vertical segments does not contribute to the cost.

We run algorithm $\Ac$ until we reach an end state, maintaining the invariant that in every state $(P,L)$, we have $(P \setminus M) \subseteq Y$. The invariant clearly holds in the initial state $(P_0, L_0)$ determined by $X$. Algorithm $\Ac$ consists of two types of greedy steps, executed in any order:
(1) if at any point, some valid flip $\left<a,b\right>$ is possible, such that $a,b \in Y \cup M$, then execute it, and (2) if at any point, some vertical segment $[a,b] \in L$ that contains no point from $Y$ (except possibly its endpoints) can be removed, then remove it. 

It remains to be shown that the algorithm does not get stuck, i.e.\ that there is always an operation of type (1) or (2) that can be executed, unless we have reached a valid end state. Consider an intermediate state $(P,L)$ during the execution of $\Ac$ and suppose for contradiction that there is no available operation of either type.

Consider two points $q, q' \in Y \cup M$ on the same horizontal line, $q$ to the left of $q'$, such that $[q,q']$ is not in $L$, and the segment $[q,q']$ contains no point of $Y$ in its interior. If there is no such pair of points, then we are done, since all horizontal lines are complete, and all remaining vertical segments can be removed. Among such pairs, consider the one where $q$ is the rightmost, in case of a tie, choose the one where $q'$ is the leftmost.

Call a point $q \in P$ \emph{left-extensible} if it is not the right endpoint of a segment in $L$, and \emph{right-extensible} if it is not the left endpoint of a segment in $L$. 

Observe that throughout the execution of $\Ac$, for any state $(P,L)$, every point in $Y$ is contained in some segment of $L$. Since $\left<q,q'\right>$ is not a valid flip, $[q,q']$ must intersect some vertical line $[z,z'] \in L$ (assume w.l.o.g.\ that $z$ is strictly above, and $z'$ is strictly below $[q,q']$). Observe that $[z,z']$ cannot contain a point of $Y$ in its interior. If it would contain such a point $z^*$, then $z^*$ would be the left endpoint of some segment missing from $L$, contradicting the choice of $q$. Thus, since removing $[z,z']$ is not a valid step, it must be that one of $z$ and $z'$ is a non-margin point that is left- or right-extensible.
If $z$ or $z'$ were right-extensible, that would contradict the choice of $q$. Therefore, one of them must be left-extensible, and assume w.l.o.g.\ that $z$ is left-extensible.

Since $Y$ is satisfied, by \Cref{prop: sat and manhattan} there has to be a point $w \in Y \setminus \{z,q\}$ either on the horizontal segment $[(q.x,z.y),z]$, or on the vertical segment $[z,(z.x,q.y)]$. 
Since $[z,z']$ cannot contain a point of $Y$ in its interior, it must be the case that $w$ is on $[(q.x,z.y),z]$, and choose $w$ to be closest to $z$. But then the segment $[w,z]$ is missing from $L$, contradicting the choice of $q,q'$ because $w.x \geq q.x$. \qedd
\end{proof}

Theorem~\ref{thm1} and Theorem~\ref{thm2} state that the Rectangulation and Satisfied Superset problems are polynomial-time equivalent. Observe that the proofs, in fact, show something stronger: For an arbitrary permutation point set $X$, a point set $Y \supseteq X$ is a solution for Satisfied Superset exactly if $Y \cup M$ is the point set of a valid (and reachable) end state for Rectangulation.

\subsection{Tree Relaxation problem} \label{sec_tr}

Consider again a permutation point set $X$ of size $n$ as input. A \emph{monotone tree} on $X$ is a rooted tree that has $X$ as the set of vertices, and whose edges are all going away from the root according to the vertical ordering of the points. That is, if two points $a = (a.x, a.y)$, $b = (b.x, b.y)$, with $a,b \in X$ are the endpoints of an edge in a monotone tree on $X$, then $a$ is closer to the root than $b$ (in graph-distance) iff $a.y < b.y$. Recall that all points in $X$ have distinct $x$- and $y$-coordinates. It follows that $x_1$ is the root of every monotone tree on $X$.

We are concerned with two special monotone trees on $X$. The \emph{treap} on $X$ is the binary search tree with the $x$-coordinates as keys, and the $y$-coordinates as heap-priorities. That is, the lowest point $x_1 \in X$ is the root of the tree, and the points left of $x_1$ form its left subtree, and the points right of $x_1$ form its right subtree, defined in a recursive fashion. (We refer to~\cite{SeidelA96} for results on treaps.) The \emph{path} on $X$ is a tree that connects all elements through a path by increasing $y$-coordinate, i.e.\ in the order $x_1, \dots, x_n$. It is easy to verify that both the treap and the path defined on $X$ are unique and that they form monotone trees on $X$. Observe that the definition of a monotone tree does not require the tree to fulfill the search tree property or even to be binary. See Figure~\ref{fig_tree} for illustration.

Given a permutation point set $X$, an algorithm $\Ac$ for the Tree Relaxation problem transforms the treap on $X$ to the path on $X$ through a sequence of \emph{valid edge-flips}, defined below. The cost of the algorithm, denoted $\Ac(X)$, is the number of edge-flips in this sequence.

Let $T$ be a monotone tree on $X$. A valid edge-flip in $T$ is defined as follows. Consider a vertex $r$ of $T$ that has at least two children. Sort the children of $r$ by their $x$-coordinate, and let $a$ and $b$ be two children that are neighbors in this sorted order, such that $a$ is below $b$ (the $y$-coordinate of $a$ is smaller than the $y$-coordinate of $b$). Then the edge-flip $(a \rightarrow b)$ adds the edge $(a,b)$ to $T$ and removes the edge $(r,b)$ from $T$. It is easy to verify that a valid edge-flip maintains the monotone tree property of $T$. The edge-flip operation is reminiscent of an edge-relaxation in shortest-path algorithms (performed in reverse). See Figure~\ref{fig_tree2} for illustration.

\begin{figure}[h]
\centering
\includegraphics[width=0.8\textwidth]{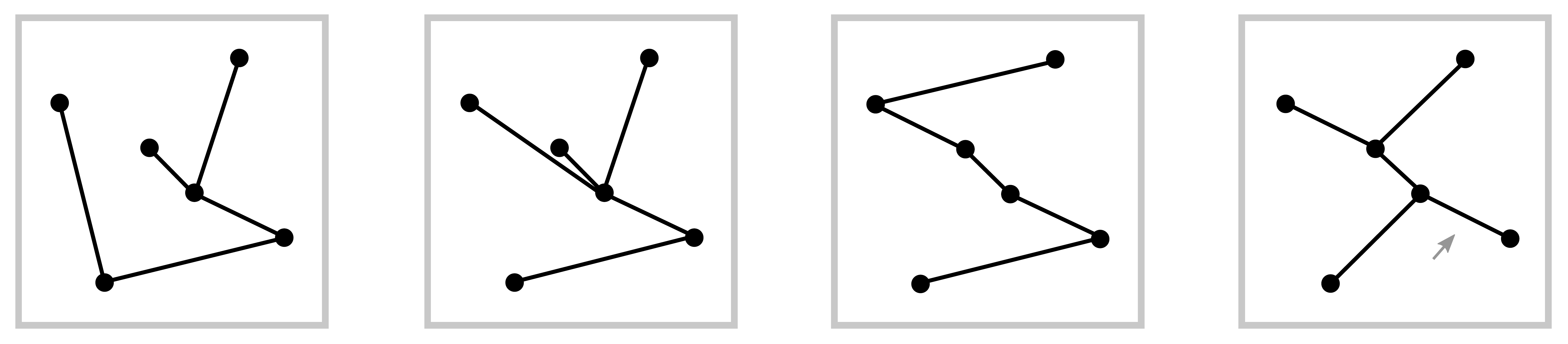}
\caption{Tree relaxation problem. From left to right: (i) treap on $X = (2,6,4,3,1,5)$, (ii) an intermediate monotone tree on $X$, (iii) path on $X$, (iv) an invalid intermediate state (tree is not monotone).\label{fig_tree}}
\end{figure}

\begin{figure}[h]
\centering
\includegraphics[width=0.8\textwidth]{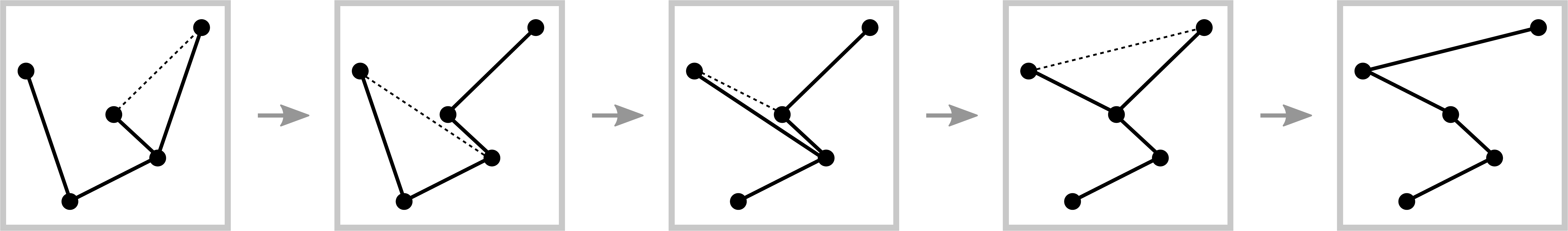}
\caption{A sequence of valid edge-flips from initial (treap) state to end (path) state. \label{fig_tree2}}
\end{figure}

The Tree Relaxation problem is closely related to the Rectangulation problem (and as a consequence, to the BST problem), as shown by the following theorem. 

\begin{theorem} \label{tree_rect}
Any algorithm $\Ac$ for the Tree Relaxation problem can be transformed (in polynomial time) into an algorithm $\Ac'$ for the Rectangulation problem, such that for all inputs $X$ of size $n$, we have $\Ac'(X) = O(\Ac(X) + n)$.

\end{theorem}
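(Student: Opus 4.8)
Two routes are available; I describe the direct one and mention the alternative at the end.

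\smallskip
\noindent\emph{Overall plan (direct approach).} We attach to every step of $\Ac$'s execution a Rectangulation state $R$, so that the algorithm $\Ac'$ proceeds in three phases: \emph{(setup)} from the all-vertical initial state $(P_0,L_0)$ reach $R(T_0)$, where $T_0$ is the treap on $X$, using $O(n)$ flips; \emph{(simulation)} whenever $\Ac$ performs an edge-flip taking $T$ to $T'$, perform $O(1)$ Rectangulation flips taking $R(T)$ to $R(T')$; \emph{(teardown)} from $R(T^*)$, where $T^*$ is the path on $X$, reach a valid end state using $O(n)$ flips. The total flip count is $O(n)+O(\Ac(X))+O(n)=O(\Ac(X)+n)$, as required.

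\smallskip
\noindent\emph{The correspondence.} A tree edge $(p,q)$ with $p=\parent(q)$ (so $p.y<q.y$) is realized geometrically by ``dropping'' a vertical segment straight down from $q=(q.x,q.y)$ to height $p.y$, at the point $(q.x,p.y)$, together with the piece of the horizontal line at height $p.y$ joining $(q.x,p.y)$ to $p$; horizontal and vertical segments are then extended only as far as elbow-freeness requires (so that every non-margin point is a T-junction, a crossing, or an interior point of a straight line). One checks that $R(T_0)$ is a valid state reachable from $(P_0,L_0)$ in $O(n)$ flips (the combinatorial complexity of the maintained state stays $O(n+k)$ after $k$ edge-flips, so each of the $n$ original columns is touched only $O(1)$ times in the setup), and that in $R(T^*)$ every horizontal line at heights $1,\dots,n$ is complete from margin to margin, so the teardown consists simply of deleting all vertical drop-segments --- which keeps the state valid and produces a genuine end state --- in $O(n)$ flips.

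\smallskip
\noindent\emph{Simulating an edge-flip.} Consider an edge-flip $(a\to b)$, where $r=\parent(a)=\parent(b)$, the points $a,b$ are $x$-consecutive children of $r$, and $a.y<b.y$; say $a.x<b.x$ (the other case is symmetric). In $R(T)$ the drop from $b$ descends to height $r.y$; in $R(T')$ it should descend only to height $a.y$ and tie into $a$'s horizontal line there. The key structural fact is that in $R(T)$ the horizontal line of $a$ at height $a.y$ already extends rightward to $b$'s column: forced by elbow-freeness it passes the T-junctions at the feet of $a$'s children, and it cannot stop until it meets a column spanning strictly above and below height $a.y$, which --- by the invariant below --- is $b$'s column. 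Hence the edge-flip is simulated by a single Rectangulation flip whose new segment is already present (a no-op addition), after which we delete the now-redundant portion of $b$'s drop below height $a.y$ and the now-stale piece of $r$'s horizontal line, and verify that elbow-freeness is restored around $r$, around $b$, and at the re-exposed foot $(b.x,a.y)$. This is $O(1)$ flips (in fact one) per edge-flip.

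\smallskip
\noindent\emph{Main obstacle.} The crux is isolating the structural invariant on the monotone trees that keeps $R(T)$ valid (crossing-free, elbow-free) and of complexity $O(n+k)$ throughout --- the class of \emph{all} monotone trees is too permissive, since badly interleaved subtrees have no crossing-free realization. The invariant is, roughly: for every pair $u,v$ of $x$-consecutive siblings with $u.y<v.y$, there is no point $c$ with $c.x$ strictly between $u.x$ and $v.x$ and $\parent(c).y<u.y<c.y$. One must show that the treap satisfies it (because it is a binary search tree, the subtree hanging below any edge occupies a contiguous $x$-interval), that a valid edge-flip preserves it, and that it yields the ``$a$'s line reaches $b$'s column'' property used above; I expect the preservation-under-edge-flips step to be the technical heart of the proof, while the $O(n)$ setup and teardown and the flip-counting are routine. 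Finally, one can alternatively package the whole argument through Theorem~\ref{thm2}, by maintaining along $\Ac$'s execution a satisfied superset $Y\supseteq X$ that grows by $O(1)$ points per edge-flip (so $|Y|=O(\Ac(X)+n)$) and checking that $Y$ is satisfied.
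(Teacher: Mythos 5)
You follow the paper's structure closely: three phases (setup to the treap state, $O(1)$ flips per edge-flip, teardown to an end state), and the overall accounting $O(n) + O(\Ac(X)) + O(n)$ matches the paper's bound. The realization intuition (a vertical "drop" from each node to its parent's level plus a horizontal piece at the parent's level) is also the same picture the paper maintains.

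However, there is a genuine gap, which you yourself flag. The paper's proof of the simulation step hinges on \emph{three} simultaneously-maintained invariants: $I_1$ (the horizontal segment at each height is contiguous), $I_2$ (\emph{nesting/alignment}: consecutive siblings' horizontal segments meet exactly, i.e.\ $r_{i_j} = \ell_{i_{j+1}}$, and all children segments sit inside the parent's segment), and $I_3$ (\emph{visibility}: the rectangles between a child segment and the parent segment are empty, with covered vertical sides). Your proposal formulates an analogue of $I_3$ only. The alignment condition $I_2$ is what makes the flip $\langle(r_i,i),(r_j,i)\rangle$ well-defined and valid, and — crucially — it is precisely $I_2$ that can \emph{break} after an edge-flip (the former rightmost child of $a$ and the new child $b$ may no longer have abutting segments), which is why the paper sometimes needs a \emph{second}, re-alignment flip. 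Your claim of exactly one flip per edge-flip is therefore too strong, and more importantly, the restoration of alignment is an essential verification step that is missing.

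A second, related issue is the assertion that ``in $R(T)$ the horizontal line of $a$ at height $a.y$ already extends rightward to $b$'s column,'' so that the simulating flip is a ``no-op addition.'' In the paper's maintained state this is false: $a$'s segment $h_i$ ends at $r_i = \ell_j$, which is the \emph{left} endpoint of $b$'s segment and is generally strictly left of $b$'s column; the flip genuinely extends $h_i$. If your realization map really forced $h_i$ to already reach $b$'s column, then the transition to $R(T')$ would be removal-only (removals cost nothing in this model), giving a total cost of $O(n)$ for every input — which contradicts the $\Omega(n\log n)$ lower bound for bit-reversal. So the accounting is internally inconsistent unless the ``already present'' claim is withdrawn and a real flip is charged; you need to pin down precisely which segment each flip creates and why it does not already exist in $R(T)$.

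Finally, the realization map $R(T)$ itself is underspecified: ``extend segments only as far as elbow-freeness requires'' is not a deterministic rule, because a bare elbow can be repaired either by extending the horizontal segment or by extending the vertical one, and different choices lead to different states. The paper avoids this ambiguity by not defining an abstract $R(T)$ at all — it maintains a concrete state through explicit flips, with $I_1, I_2, I_3$ constraining what that state looks like. If you want to keep the $R(T)$ formalism, you need to make the extension rule canonical (e.g.\ ``always extend horizontally'') and then prove your visibility invariant \emph{and} a nesting/alignment invariant are preserved by edge-flips.

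Your suggested alternative route (maintain a satisfied superset $Y$ growing by $O(1)$ per edge-flip and invoke Theorem~\ref{thm2}) is plausible but also left unverified; showing that $Y$ stays satisfied after each edge-flip is roughly the same amount of work as the invariant maintenance above, and that is where the content of this theorem lives.
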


\begin{proof}
We start $\Ac'$ with an \emph{initial phase}, then we simultaneously run algorithm $\Ac$ for Tree Relaxation on $X$, and output the operations of $\Ac'$ for Rectangulation on $X$, such that we output at most two flips in $\Ac'$ for every edge-flip in $\Ac$. 
We finish $\Ac'$ with a \emph{cleanup phase}. Both the initial phase and the cleanup phase consist of $O(n)$ flips, to be specified later.

In any given state $(P,L)$ during the execution of $\Ac'$, let $H_i$ denote the set of horizontal segments in $L$ at height $i$. After the initial phase, we maintain throughout the execution of $\Ac'$ the following invariants, denoted $I_1$, $I_2$, and $I_3$. \\

\textbf{$I_1$ (contiguity):} For all $i$, the union of the horizontal segments in $H_i$ form a contiguous horizontal segment, which we denote $h_i$. \\

\textbf{$I_2$ (nesting):} 
For all $i$, denote the $x$-coordinate of the left (resp.\ right) endpoints of $h_i$ as $\ell_i$ (resp.\ $r_i$). Let $x_{i_1}, \dots, x_{i_k}$ be the children of $x_t$ in the current monotone tree on $X$, sorted by $x$-coordinate (i.e.\ $x_{i_1} < \cdots < x_{i_k}$). We have that the endpoints of the segments $h_{i_1}, \dots, h_{i_k}$ are aligned, and not overhanging the parent segment $h_t$. More precisely, $\ell_t \leq \ell_{i_1}$, and $r_{i_k} \leq r_t$, and for all $j=1,\dots,k-1$, we have $r_{i_j} = \ell_{i_{j+1}}$. For the root $x_1$ of the tree, we have $\ell_1 = 0$, and $r_1 = n+1$.\\

\textbf{$I_3$ (visibility):} Let $x_{i_1}, \dots, x_{i_k}, x_t$ be defined as before. 
For $j=1,\dots,k$, let us denote by $\Rc_j$ the axis-aligned rectangle with corners $(\ell_{i_j}, i_j)$, $(r_{i_j}, t)$. Let $\Rc_0$ be the rectangle with corners $(\ell_t, t)$, $(\ell_{i_1}, n+1)$, and let $\Rc_{j+1}$ be the rectangle with corners $(r_{i_k},n+1)$, $(r_t, t)$. We have that the interiors of the rectangles $\Rc_0, \Rc_1, \dots, \Rc_{j+1}$ are not intersected by any segment in $L$ in the current state of $\Ac'$. Furthermore, the vertical sides of the rectangles $\Rc_0, \Rc_1, \dots, \Rc_{j+1}$ are either touching the margin, or fully covered by segments in $L$. \\

In the end we show that $\Ac'$ reaches a valid end state of Rectangulation. Since the total number of flips performed is at most $O(n) + 2 \cdot \Ac(X)$, the claim on the cost follows. It remains to describe the steps of the algorithm $\Ac'$.

\paragraph{Initial phase.}

For each $(i = n, \dots, 1)$, flip $\left<L_i, x_i\right>$ and $\left<x_i, R_i\right>$, where $L_i$ ($R_i$) is the leftmost (rightmost) point such that the corresponding flip is valid. After every flip, remove all possible vertical segments from the current state before proceeding to the next $i$. In particular, remove the vertical segment with endpoints $(x_i,0)$, $(x_i,i)$. \\

Recall that in the beginning, $\Ac$ is in the state that the current tree is the treap on $X$. Let $T$ be the treap on $X$. We show that the invariants hold after the initial phase. Observe that $I_1$ holds trivially: After the initial phase we have the horizontal contiguous segments $h_i = [L_i, R_i]$. 

We prove $I_2$ and $I_3$ by induction. Clearly, if $|X| = 1$, the invariants hold. Consider the last step, when we flip $\left<L_1, x_1\right>$ and $\left<x_1, R_1\right>$, for suitable $L_1$, $R_1$. Denote by $v$ the vertical segment with $x$-coordinate equal to $x_1$. Observe that for all $i>1$, the segments $h_i$ can intersect $v$ only at their endpoints ($L_i$ and $R_i$). Since none of these points are extended both to the left and to the right, no portion of $v$ has been removed before this step. This means that the rectangulations on the two sides of $v$ are independent of each other, i.e.\ they would have been the same even if the input on the other side of $v$ were different. In particular, this means that, by induction $I_2$ and $I_3$ hold for the rectangulations on the two sides of $v$, corresponding to the left and right subtrees of $x_1$ in the treap $T$. 

Let $x_i$ and $x_j$ be the left, respectively right child of $x_1$ in $T$ (one of the two might be missing, in case $x_1 = 1$ or $x_1 = n$). By $I_2$, we have that $h_i$ extends horizontally from $0$ to $x_1$, and $h_j$ extends horizontally from $x_1$ to $n+1$. Since the vertical segments below every point $x_i$ have been removed, we can execute the flips $\left<(0,1),x_1\right>$ and $\left<x_1,(n+1,1)\right>$. Finally, since $x_1$ is complete, we can remove the vertical segment with top endpoint $x_1$. 

Both $I_2$ and $I_3$ are established for the full tree $T$, completing the induction.

\paragraph{Flips during the execution of $\Ac$.}

Let $x_i$ and $x_j$ be neighboring children of $x_t$ in the current tree, such that $i<j$, and the valid edge-flip $(x_i \rightarrow x_j)$ is executed in $\Ac$. Assume w.l.o.g.\ that $x_i < x_j$. By $I_1$, there exist contiguous horizontal segments $h_i$ and $h_j$.
Observe that $h_i$ is below $h_j$. Let $\ell_i$, $\ell_j$, and $\ell_t$ ($r_i$, $r_j$, and $r_t$) denote the $x$-coordinates of the left (right) endpoints of $h_i$, $h_j$, and $h_t$. By $I_2$, we have $\ell_t \leq \ell_i < r_i = \ell_j < r_j \leq r_t$. Let $L = (r_i, i)$, and let $R = (r_j, i)$. Then the flip $\left<L,R\right>$ is executed in $\Ac'$. 

Let us verify that the flip is valid. Due to invariant $I_3$, the rectangle with corners $(r_i,i)$ and $(r_j,j)$ has empty interior, therefore the flip intersects no vertical segment. Moreover, by $I_3$, the right side of the rectangle is covered by segments. Therefore, the flip creates no crossing, elbow, or point of degree one, it is therefore a valid flip. 

After the flip, $x_j$ is the child of $x_i$, and $x_i$ is the child of $x_t$. It is easy to verify that the invariants are maintained, except for the following case: Let $x_k$ be the rightmost child of $x_i$ \emph{before the flip}, with endpoints $\ell_k$ and $r_k$. After the flip, $x_k$ and $x_j$ are neighboring siblings, but it may happen that their endpoints are not aligned, i.e.\ $r_k < \ell_j$. If this is the case, we need to perform an additional flip. Suppose that $h_k$ is lower than $h_j$. Then execute in $\Ac'$ the flip $\left<(r_k, j), (\ell_j, j) \right>$. In the case when $h_j$ is lower than $h_k$, execute the flip $\left<(r_k, k), (\ell_j, k) \right>$. The flips are valid by $I_3$ before the edge-flip in $\Ac$, and by the flip, $I_3$ is re-established using at most two flips. 

In the end, if $\Ac$ is a correct algorithm for Tree Relaxation, it will end with a path tree. Let $h_1, \dots, h_n$ be the horizontal lines corresponding to the current state in the execution of $\Ac'$. 

\paragraph{Cleanup phase.} From invariants $I_1$, $I_2$, and $I_3$, it follows that $\ell_1 \leq \ell_2 \leq \dots \leq \ell_n < r_n \leq \dots \leq r_1$. We can transform this state to a valid end state for Rectangulation, with the valid flips (in this order): $\left<0,\ell_1\right>, \left<0, \ell_2\right>, \dots, \left<0, \ell_n\right>, \left<r_1, n+1\right>, \dots, \left<r_n, n+1\right>$. \qedd

\end{proof}

\subsection{Signed Satisfied Superset problem} \label{sgr}

In this section, we show that a relaxed version of the Satisfied Superset
problem, called Signed Satisfied Superset~\cite{DHIKP09} is equivalent to the relaxed version of the Rectangulation problem, called Signed Rectangulation.

\paragraph{Signed Satisfied Superset.}

We recall the definition of Signed Satisfied Superset from DHIKP~\cite{DHIKP09}. A point set $Y\subseteq[n]\times[n]$ is \emph{$\boxslash$-satisfied}
if for any two points $a,b\in Y$ where $a_{x}<b_{x}$ and $a_{y}<b_{y}$,
the rectangle with corners $a$ and $b$ contains some point in $Y\setminus\{a,b\}$,
possibly on the boundary of the rectangle. We similarly say that $Y$ is 
\emph{$\boxbslash$-satisfied} if the condition holds for all $a,b\in Y$
where $a_{x}<b_{x}$ and $a_{y}>b_{y}$. Note that $Y$ is
satisfied iff $Y$ is $\boxslash$-satisfied and $\boxbslash$-satisfied.

Given a permutation point set $X$ of size $n$, an algorithm $\Ac$
for the $\boxslash$- ($\boxbslash$-) Satisfied Superset problem outputs a point set
$Y$ with $X\subseteq Y\subseteq[n]\times[n]$, such that $Y$ is
$\boxslash$-satisfied ($\boxbslash$-satisfied). The cost of $\Ac$ is the
size of the set $Y$, denoted $\Ac(X)$. We generally call
the $\boxslash$- and $\boxbslash$-Satisfied Superset problems \emph{Signed} Satisfied Superset. (The symbols $\boxslash$ and $\boxbslash$ can be read as ``plus'' and ``minus''.)

Let $\OPTSp(X)$ and $\OPTSm(X)$ be the optimum cost for the $\boxslash$-, resp.\ $\boxbslash$-Satisfied Superset problem for $X$. DHIKP~\cite{DHIKP09} show
that the natural variants of Greedy called Greedy$_{\boxslash}$ and
Greedy$_{\boxbslash}$ return the optimum solution, i.e.\ $\mbox{Greedy}_{\boxslash}(X)=\OPTSp(X)$
and $\mbox{Greedy}_{\boxbslash}(X)=\OPTSm(X)$. Let SignedGreedy~\cite{Harmon, DHIKP09} be the algorithm
that returns the union of the Greedy$_{\boxslash}$ and Greedy$_\boxbslash$ solutions.

While it is clear that $\max\left\{\OPTSp(X),\ \OPTSm(X)\right\}\le\OPTS(X)$ for
every $X$, it is a long-standing open problem whether $\max\left\{\OPTSp(X),\ \OPTSm(X)\right\}=\Omega(\OPTS(X))$.
If true, this would imply that SignedGreedy is a polynomial-time 
algorithm for constant-approximating $\OPTS(X)$ (and in light of Theorem~\ref{lem1}, the BST optimum as well).

\paragraph{Signed Rectangulation.}

Let $[p,q]$ be a vertical segment, and $[q,r]$ be a horizontal segment.
We say that $[p,q]$ and $[q,r]$ form a \emph{\elbl-elbow} iff (i)
$p$ is above $q$ and $r$ is on the right of $q$, or (ii) $p$
is below $q$ and $r$ is on the left of $q$. Symmetrically, we say
that $[p,q]$ and $[q,r]$ form a \emph{\elbr-elbow} iff (i) $p$
is above $q$ and $r$ is on the left of $q$, or (ii) $p$ is below
$q$ and $r$ is on the right of $q$.

A state $(P,L)$ of the Rectangulation problem is \emph{\elbl-elbow-free}, respectively \emph{\elbr-elbow-free} iff each non-margin point in $P$ is contained
in at least two segments of $L$, and if it is contained in exactly
two segments, then they must not form a \elbl-elbow, resp.\ \elbr-elbow. 

We define the \elbl-Rectangulation problem the same way as Rectangulation, 
except that we only require that each state $(P,L)$ of the \elbl-Rectangulation
problem is \emph{\elbr-elbow-free} instead of elbow-free (i.e.\ the \elbl elbows are allowed). We 
similarly define the \elbr-Rectangulation problem. We generally call the \elbl-
and \elbr-Rectangulation problems \emph{Signed} Rectangulation. (The symbols \elbl and \elbr can be read as ``plus'' and ``minus''.)

Given any set $S$ of allowed elbows, we can similarly define the $S$-Rectangulation problem in an obvious way. For example, \elba-Rectangulation problem or \elbtr-Rectangulation problem.

\begin{theorem} \label{thm: siRec gives siSat}
	Any algorithm $\Ac$ for the \textrm{\elbr}- or \textrm{\elbl}-Rectangulation problem
	can be transformed (in polynomial time) into an algorithm $\Ac'$
	for the $\boxslash$-, respectively $\boxbslash$-Satisfied Superset problem, such that for all
	inputs $X$, we have $\Ac'(X)=O(\Ac(X))$.\end{theorem}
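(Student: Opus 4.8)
The plan is to mirror the proof of Theorem~\ref{thm1}, exploiting only the \emph{one} orientation of elbow that the signed model forbids. I will carry out the \elbr-Rectangulation $\to$ $\boxslash$-Satisfied Superset direction; the \elbl-case then follows verbatim after reflecting the plane across a vertical axis, a map which exchanges \elbl- and \elbr-elbows and exchanges $\boxslash$- and $\boxbslash$-satisfaction. Given an algorithm $\Ac$ for \elbr-Rectangulation, run it on the initial state determined by $X$, start with $Y = X$, and add the non-margin endpoints $\{a,b\}\setminus M$ after every flip $\left<a,b\right>$; call this $\Ac'$. The bound $\Ac'(X) = O(\Ac(X))$ is immediate, exactly as in Theorem~\ref{thm1}, so the entire content is to prove that $Y$ is $\boxslash$-satisfied.

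Suppose not: there are $a,b \in Y$ with $a.x < b.x$ and $a.y < b.y$ such that the closed rectangle $\Rc$ with corners $a,b$ contains no point of $Y \setminus \{a,b\}$. As in Theorem~\ref{thm1}, let $\left<a,a'\right>$ be the last flip that creates a horizontal segment having $a$ as its left endpoint (so $a'$ is at height $a.y$, to the right of $a$), and let $\left<b',b\right>$ be the last flip that creates a horizontal segment having $b$ as its right endpoint; emptiness of $\Rc$ forces $a'$ strictly to the right of $b$ and $b'$ strictly to the left of $a$, since otherwise one of these non-margin endpoints (a point of $Y$) would lie in $\Rc$. The family of valid \elbr-Rectangulation states is invariant under $180^{\circ}$ rotation (the rotation sends \elbl-elbows to \elbl-elbows), and this rotation swaps the roles of $a$ and $b$, so we may assume w.l.o.g.\ that $\left<b',b\right>$ happens before $\left<a,a'\right>$. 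Just before $\left<a,a'\right>$, the leftward horizontal extension of $b$ is present and has never been removed (by the choice of ``last''), and a split of this extension at an $x$-coordinate in $[a.x,b.x)$ would again place a point of $Y$ in $\Rc$; hence this extension is a single horizontal segment $\sigma$ at height $b.y$ whose right endpoint is $b$ and whose left endpoint has $x$-coordinate $< a.x$. In particular $\sigma$ contains the point $(a.x,\,b.y)$ in its relative interior.

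The one genuinely new ingredient is the local analysis at $a$ just before the flip $\left<a,a'\right>$. As in Theorem~\ref{thm1} we may assume $a$ has at that moment no horizontal extension to the right (otherwise a better choice of ``last flip'' exists). Because only \elbl-elbows are forbidden, the configuration in which $a$ carries exactly a leftward horizontal and a downward vertical is still excluded (it is a \elbl-elbow), whereas the configuration of a leftward horizontal together with an \emph{upward} vertical is now allowed; so, unlike in the unsigned proof, we cannot conclude that $a$ is the top endpoint of a vertical segment. The point is that this does not matter: enumerating all configurations at $a$ permitted by the elbow-free condition that contain no rightward horizontal segment — including the degenerate case in which the flip $\left<a,a'\right>$ itself introduces $a$ and $a$ lies in the relative interior of a segment the flip splits — one checks that the only dangerous configuration (leftward horizontal, downward vertical, and nothing else) is precisely a forbidden \elbl-elbow, and that every surviving configuration makes $a$ the \emph{bottom} endpoint of some vertical segment $[a,a^{**}]$ with $a^{**}.y > a.y$. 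Now if $a^{**}.y \le b.y$, then $a^{**}$ is a non-margin point of $Y$ lying on the left edge of $\Rc$, contradicting emptiness; and if $a^{**}.y > b.y$, then $[a,a^{**}]$ contains $(a.x,\,b.y)$ in its relative interior and hence crosses $\sigma$, contradicting validity of the state. Either way we obtain a contradiction, so $Y$ is $\boxslash$-satisfied.

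The step I expect to be the main obstacle is exactly this last local case analysis at $a$: one must carefully list every elbow-free configuration (taking account of whether the flip introduces $a$, and of whether $a$ already carries a leftward horizontal), verify that the only configuration lacking an upward vertical is the forbidden \elbl-elbow, and handle the same delicate choice of the ``last'' flips $\left<a,a'\right>$ and $\left<b',b\right>$ that already appears in Theorem~\ref{thm1}. Once that is in place, the containment/crossing dichotomy with $\sigma$ closes the $\boxslash$-case, and the $\boxbslash$-case is obtained by the reflection noted at the outset.
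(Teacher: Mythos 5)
Your proposal is correct and takes essentially the same approach as the paper, which likewise adapts the proof of Theorem~\ref{thm1} and explicitly flags the local elbow analysis at $a$ as ``the only difference''; you simply work out the symmetric \elbr/$\boxslash$ direction instead of the \elbl/$\boxbslash$ direction the paper writes out. You are somewhat more meticulous about edge cases (the degenerate situation where the flip $\left<a,a'\right>$ introduces $a$, the persistence and splitting of the left extension of $b$, and the precise $180^{\circ}$-rotation symmetry behind the WLOG), all of which the paper leaves implicit.
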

\begin{proof}
	We only show the case of \elbl-Rectangulation. The other case is symmetric.
	The proof goes in the same way as in \Cref{thm1}. 
	Initially, let $Y=X$. We construct an algorithm $\Ac'$ from $\Ac$ by adding to $Y$ every non-margin endpoint created while flipping from the all-vertical to the all-horizontal state in Rectangulation. The cost of $\Ac$ is equal to the number of flips. Since each flip adds at most two points to $Y$, the claim on the cost of $\Ac'$ is immediate. 
	
	We claim that $Y$ is $\boxbslash$-satisfied. Suppose otherwise, that there are two points $a,b\in Y$ where $a$ is above and to the left of $b$. 
	Let $\left<a,a'\right>$ be the last flip in the execution of $\Ac$ such that $a'$ is on the same horizontal line as $a$ and to the right of $a$. Let $\left<b',b\right>$ be the last flip such that $b'$ is on the same horizontal line as $b$ and to the left of $b$.
	(There have to be such flips, otherwise $\Ac$ would not produce a valid end state.) Since the rectangle with corners $a,b$ is empty, $b'$ must be to the left of $a$, and $a'$ must be to the right of $b$. 
	
	Suppose that the flip $\left<b',b\right>$ occurs earlier than the flip $\left<a,a'\right>$ (the other case is symmetric), and consider the state before the flip $\left<a,a'\right>$. In that state there must be a vertical segment with top endpoint at $a$, otherwise $a$ would be contained in 
	\elbr-elbow. (This is the only difference from the proof of \Cref{thm1}.)
	Let $a^*$ be the bottom endpoint of the vertical segment with top endpoint $a$. The point $a^*$ must be strictly below $b$, for otherwise the rectangle with corners $a,b$ would contain it. This means that $[a,a^*]$ intersects $[b',b]$, contradicting that we are in a valid state. 
	We conclude that $Y$ is a $\boxbslash$-satisfied superset of $X$. \qedd 
\end{proof}
\begin{theorem}\label{thm: siSat gives siRec}
	Any algorithm $\Ac'$ for the $\boxslash$-Satisfied Superset
	problem can be transformed (in polynomial time) into an algorithm
	$\Ac$ for the \elbdr-Rectangulation problem, such that
	for all inputs $X$, we have $\Ac(X)=O(\Ac'(X))$.\end{theorem}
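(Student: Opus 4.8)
The plan is to reuse the proof of Theorem~\ref{thm2} almost verbatim, changing only which operations the greedy algorithm is allowed to perform --- so that every state visited is \elbdr-valid --- and then re-inspecting the one place where that proof invokes satisfiedness, to check that here a positively sloped (i.e.\ $\boxslash$-oriented) pair always suffices. Given $\Ac'$ that on input $X$ outputs a $\boxslash$-satisfied superset $Y \supseteq X$, I would start from the initial Rectangulation state $(P_0,L_0)$ determined by $X$ and repeatedly apply, in any order, two greedy steps: (1) any flip $\left<a,b\right>$ with $a,b \in Y \cup M$ that is valid for \elbdr-Rectangulation and creates a new horizontal segment; and (2) the deletion of any vertical segment that contains no point of $Y$ in its interior and whose removal keeps the state \elbdr-valid. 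No horizontal segment is ever deleted. The cost bookkeeping is exactly as in Theorem~\ref{thm2}: charge each flip to a non-margin endpoint in $Y$, observe that each point of $Y$ is charged at most twice and that at most $O(n) \le O(|Y|)$ flips have both endpoints on the margin, and conclude $\Ac(X) = O(|Y|) = O(\Ac'(X))$. Every operation is \elbdr-valid by construction, so it only remains to prove that $\Ac$ reaches a valid end state.

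As in Theorem~\ref{thm2}, I would show that whenever $\Ac$ is stuck it is already at an end state. If $\Ac$ is stuck and all horizontal lines at heights $1,\dots,n$ are complete (consecutive points of $Y \cup M$ on each are joined by a segment of $L$), then exactly as there no vertical segment carries a point of $Y$ in its interior and every non-margin point of $P$ has both a left and a right horizontal extension; hence every remaining vertical segment would be deletable by a step of type~(2), so stuckness forces that none remain and we are at an all-horizontal end state. So suppose instead that $\Ac$ is stuck and some line at a height $i \in [n]$ is incomplete. Then there is a \emph{gap pair} $q,q' \in Y \cup M$ on that line with $q$ left of $q'$, $[q,q'] \notin L$, and no point of $Y$ in the open segment $(q,q')$; I pick such a pair with $q.x$ maximum, breaking ties by minimizing $q'.x$. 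As in Theorem~\ref{thm2} --- using that adding a horizontal segment at a non-margin point of a \elbdr-valid state creates neither a forbidden elbow nor a degree-one vertex --- the flip $\left<q,q'\right>$ fails because $[q,q']$ is crossed by a vertical segment $[z,z'] \in L$ with $z$ strictly above and $z'$ strictly below the line and $z.x = z'.x =: \xi \in (q.x,q'.x)$; moreover $[z,z']$ carries no point of $Y$ in its interior (such a point would be the left endpoint of a gap pair at its own height, with $x$-coordinate $\xi > q.x$). Since $[z,z']$ was not deleted, its removal must violate \elbdr-validity at $z$ or at $z'$, where the offending endpoint is non-margin and hence lies in $Y$.

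The heart of the proof --- and where the real work lies --- is the case analysis of this failure, and this is exactly where allowing the \elbdr-elbow, and no other, earns its keep. Write ``up-leg'', ``down-leg'', ``left-leg'', ``right-leg'' for the segments of $L$ incident to a point in each direction, so $[z,z']$ is $z$'s down-leg and $z'$'s up-leg. Enumerating the \elbdr-valid local configurations, removing $z'$'s up-leg leaves $z'$ invalid only when $z'$ has no right-leg: the two configurations that could otherwise confront us with a $\boxbslash$-oriented pair are harmless here, since an up-leg together with a right-leg is a forbidden (up-and-right) elbow in a \elbdr-valid state, while an up-leg together with a down-leg and a right-leg is still deletable --- removing the up-leg leaves precisely the permitted \elbdr-elbow (a down-leg and a right-leg). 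Thus in every surviving failure at $z'$, the point $z'$ has no right-leg, hence is the left endpoint of a gap pair at height $z'.y$ with $x$-coordinate $\xi > q.x$, contradicting the choice of $q$; no satisfiedness is needed. Similarly, a failure at $z$ in which $z$ has no right-leg makes $z$ right-extensible and again produces such a gap pair. The sole remaining case is a failure at $z$ where $z$ has a right-leg but no left-leg (its legs are a down-leg and a right-leg, or an up-leg, down-leg and right-leg), so that $z$ is left-extensible --- and this case alone needs satisfiedness.

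For that last case I would apply $\boxslash$-satisfiedness to the pair $(q,z)$, which is positively sloped because $\xi > q.x$ and $z.y > q.y$: iterating the $\boxslash$-condition inside the rectangle with corners $q$ and $z$ --- equivalently, invoking the $\boxslash$-analogue of \Cref{prop: sat and manhattan} --- produces a point $w \in Y \setminus \{q,z\}$ on the top or right edge of that rectangle, and $w$ cannot be on the right edge, since it would then lie in the interior of $[z,z']$ or strictly inside the segment $(q,q')$. Hence $w$ is at height $z.y$ with $q.x \le w.x < \xi$; and since $z$ is the right endpoint of no horizontal segment, the nearest point of $Y \cup M$ to the left of $z$ at height $z.y$ forms with $z$ a gap pair whose left endpoint has $x$-coordinate at least $q.x$ and whose right endpoint $z$ has $x$-coordinate $\xi < q'.x$ --- contradicting either the maximality of $q.x$ or, when these are equal, the tie-break that minimizes $q'.x$. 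This closes every case, so $\Ac$ reaches an end state, which satisfies $P^{*} \setminus M \subseteq Y$ and $P^{*} \supseteq P_0$ and is therefore a valid and reachable \elbdr-Rectangulation solution. The main obstacle is precisely the bookkeeping of the two previous paragraphs: verifying that every way the blocking vertical segment can resist deletion either contradicts the choice of $q$ outright or leaves us with a positively sloped pair --- that is, that permitting exactly the \elbdr-elbow is what removes the need for the $\boxbslash$-direction on which Theorem~\ref{thm2} relied. A minor loose end is the $\boxslash$-version of \Cref{prop: sat and manhattan}, which should follow by the same routine induction.
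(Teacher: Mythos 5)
Your proof is correct and follows essentially the same route as the paper's: you run the same two greedy operations, select the gap pair $(q,q')$ with the same rightmost/leftmost tie-break, derive the same blocking vertical segment $[z,z']$ with no interior point of $Y$, argue that non-removability forces $z$ to be left-extensible, and then invoke the $\boxslash$-analogue of Proposition~\ref{prop: sat and manhattan} on the positively sloped pair $(q,z)$ to obtain $w$ on the top edge and thereby a gap pair contradicting the choice of $(q,q')$. The only difference is that you spell out the local case analysis of \elbdr-validity at $z$ and $z'$ explicitly (enumerating leg configurations and observing that the permitted down-right elbow is precisely what makes the ``$z'$ left-extensible'' branch of Theorem~\ref{thm2} disappear), whereas the paper states the resulting trichotomy --- $z$ left-extensible, $z$ right-extensible, or $z'$ right-extensible --- without showing the enumeration; this is a useful expansion rather than a different argument.
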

\begin{proof}
	We only show the case for $\boxslash$-Satisfied Superset problem. The other case is symmetric.
	The proof goes in the same way as in \Cref{thm2}.
	Let $Y$ be a $\boxslash$-satisfied set constructed by $\Ac'$.
	We construct an algorithm $\Ac$ that maintains the state $(P,L)$ with the following operations in a greedy manner:
	(1) if some valid flip $\left<a,b\right>$ is possible where $a,b \in Y \cup M$, then execute it, and
	(2) if some vertical segment $[a,b]\in L$ containing no point from $Y$ (except possibly its endpoints) can be removed, then remove it.
	Here, the valid flip is defined according to the \elbdr-Rectangulation problem.
	We claim that $\Ac$ reaches an end state. The cost of $\Ac$ follows with the same argument as in \Cref{thm2}.
	Suppose for contradiction that $\Ac$ gets stuck at an intermediate state $(P,L)$.
	
	As in \Cref{thm2}, consider two points $q, q' \in Y \cup M$ on the same horizontal line, $q$ to the left of $q'$, such that $[q,q']$ is not in $L$, and the segment $[q,q']$ contains no point of $Y$ in its interior. If there is no such pair of points, then we are done, since all horizontal lines are complete, and all remaining vertical segments can be removed. Among such pairs, consider the one where $q$ is the rightmost, in case of a tie, choose the one where $q'$ is the leftmost. 
	Left-extensibility and right-extensibility are defined as in the proof of \Cref{thm2}.
	
	Observe that throughout the execution of $\Ac$, for any state $(P,L)$, every point in $Y$ is contained in some segment of $L$. Since $\left<q,q'\right>$ is not a valid flip, $[q,q']$ must intersect some vertical line $[z,z'] \in L$ (assume w.l.o.g.\ that $z$ is strictly above, and $z'$ is strictly below $[q,q']$). Observe that $[z,z']$ cannot contain a point of $Y$ in its interior. If it would contain such a point $z^*$, then $z^*$ would be the left endpoint of some segment missing from $L$, contradicting the choice of $q$. Thus, since removing $[z,z']$ is not a valid step according to \elbdr-Rectangulation problem, it must be that either $z$ is left-extensible, $z$ is right-extensible or $z'$ is right-extensible.
	If $z$ or $z'$ were right-extensible, that would contradict the choice of $q$. Therefore, $z$ is left-extensible.
	
	Since $Y$ is $\boxslash$-satisfied, by the statement analogous to \Cref{prop: sat and manhattan} there has to be a point $w \in Y \setminus \{z,q\}$ either on the horizontal segment $[(q.x,z.y),z]$, or on the vertical segment $[z,(z.x,q.y)]$. 
	Since $[z,z']$ cannot contain a point of $Y$ in its interior, it must be the case that $w$ is on $[(q.x,z.y),z]$, and choose $w$ to be closest to $z$. But then the segment $[w,z]$ is missing from $L$, contradicting the choice of $q,q'$ because $w.x \geq q.x$. \qedd
	 
\end{proof}
Theorems analogous to \Cref{thm: siSat gives siRec} for \elbul,\elbur,\elbdl-Rectangulation can be shown similarly.
By \Cref{thm: siRec gives siSat,thm: siSat gives siRec}, we have that (i) $\boxslash$-Satisfied Superset problem, \elbr,\elbdr,\elbul-Rectangulation problems are equivalent, and (ii) $\boxbslash$-Satisfied Superset problem, \elbl,\elbdl,\elbur-Rectangulation problems are equivalent.

Let us also consider the case of allowing two types of elbows that are neighbors in the clockwise ordering of the four possible elbows, i.e.\ the \elba, \elbb, \elbc, \elbd-Rectangulation problems. Together with \elbl and \elbr, these are all possible cases with two types of allowed elbows. We argue that \elba-Rectangulation is trivial: For every input of size $n$ there is a flip sequence of length $O(n)$. Due to the symmetries of the problem, the same holds for the \elbb, \elbc, \elbd ~cases, and consequently, also for Rectangulation with three or four types of allowed elbows.

The algorithm for obtaining a linear sequence of flips for \elba-Rectangulation is as follows. First execute an inital phase as in the proof of Theorem~\ref{tree_rect}, then complete the horizontal rectangulation line by line, from top to bottom. At step $k$, assume that the horizontal lines $k,\dots,n$ are completed. Remove every vertical segment whose top endpoint is at height $k$ (observe that this can only create \elba-elbows). Then, complete the horizontal line at height $k-1$, by flipping horizontal segments at height $k-1$ to the maximum extent possible (this can not create crossings, since we removed vertical segments in the previous step).

\section{Consequences for Flip Diameter}\label{sec_rect}

In this section, we study the Flip Diameter problem introduced by Ackerman et al.~\cite{Ackerman}. Ackerman et al.~study the distance between two rectangulations constrained by the same set of points, where distance refers to the shortest sequence of local operations that transform one rectangulation into the other. 

The concept of rectangulation studied by Ackerman et al.\ is the same as the one we defined in~\textsection\,\ref{sec_r_intro}, apart from the fact that we keep track of all intersection points that are created in the sequence of transformations from one rectangulation to another, whereas in the problem studied by Ackerman et al.\ this is not explicitly needed. (Their definition of a rectangulation is essentially the union of all segments in $L$, for a given state $(P,L)$.) 

The Flip Diameter problem asks, given a set of points constraining rectangulations, to find the largest possible distance between two rectangulations. This is in contrast to the problem described in~\textsection\,\ref{sec_r_intro}, where we are concerned with the distance between two \emph{particular} rectangulations, namely the all-horizontal, and the all-vertical one.

Finally, the local operations used by Ackerman et al.\ are slightly different from the flip operation we define in~\textsection\,\ref{sec_r_intro}. In the following, we define the rotate and flip operation used by Ackerman et al.\ in the context of our Rectangulation problem. We call these two operations \emph{A-rotate} and \emph{A-flip}.

Given a valid state $(P,L)$ of Rectangulation, an \emph{A-rotate} operation consists of removing a segment $[x,y]$ from $L$, and adding a new segment $[x,z]$ to $L$. If $z$ is contained in the interior of some segment $[a,b] \in L$, we remove $[a,b]$ from $L$, and add $[a,z]$ and $[z,b]$ to $L$. We denote the resulting set of segments $L'$ and we let $P' = P \cup \{z\}$. The A-rotate operation is valid, if $[x,y]$ and $[x,z]$ have different orientations (i.e.\ one of them horizontal, the other vertical), and if the resulting state $(P',L')$ is a valid state of Rectangulation.

Given a valid state $(P,L)$ of Rectangulation, an \emph{A-flip} operation consists of removing two segments $[x,y]$ and $[y,z]$ from $L$ and adding new segments $[v,y]$ and $[y,w]$ to $L$. If $v$ is contained in the interior of some segment $[a,b] \in L$, we remove $[a,b]$ from $L$ and add $[a,v]$ and $[v,b]$. Similarly, if $w$ is contained in the interior of some segment $[c,d] \in L$, we remove $[c,d]$ from $L$ and add $[c,w]$ and $[w,d]$. We denote the resulting set of segments $L'$ and we let $P' = P \cup \{v,w\}$. The A-flip operation is valid, if $[x,y]$ and $[y,z]$ have the same orientiation (i.e.\ both horizontal or both vertical), $[v,y]$ and $[y,w]$ have the same orientation (i.e.\ both horizontal or both vertical), different from the orientation of $[x,y]$, and if the resulting state $(P',L')$ is a valid state of Rectangulation. We illustrate the A-rotate and A-flip operations in Figure~\ref{fig:fliprot}.

\begin{figure}[h]
\centering
\includegraphics[width=0.3\textwidth]{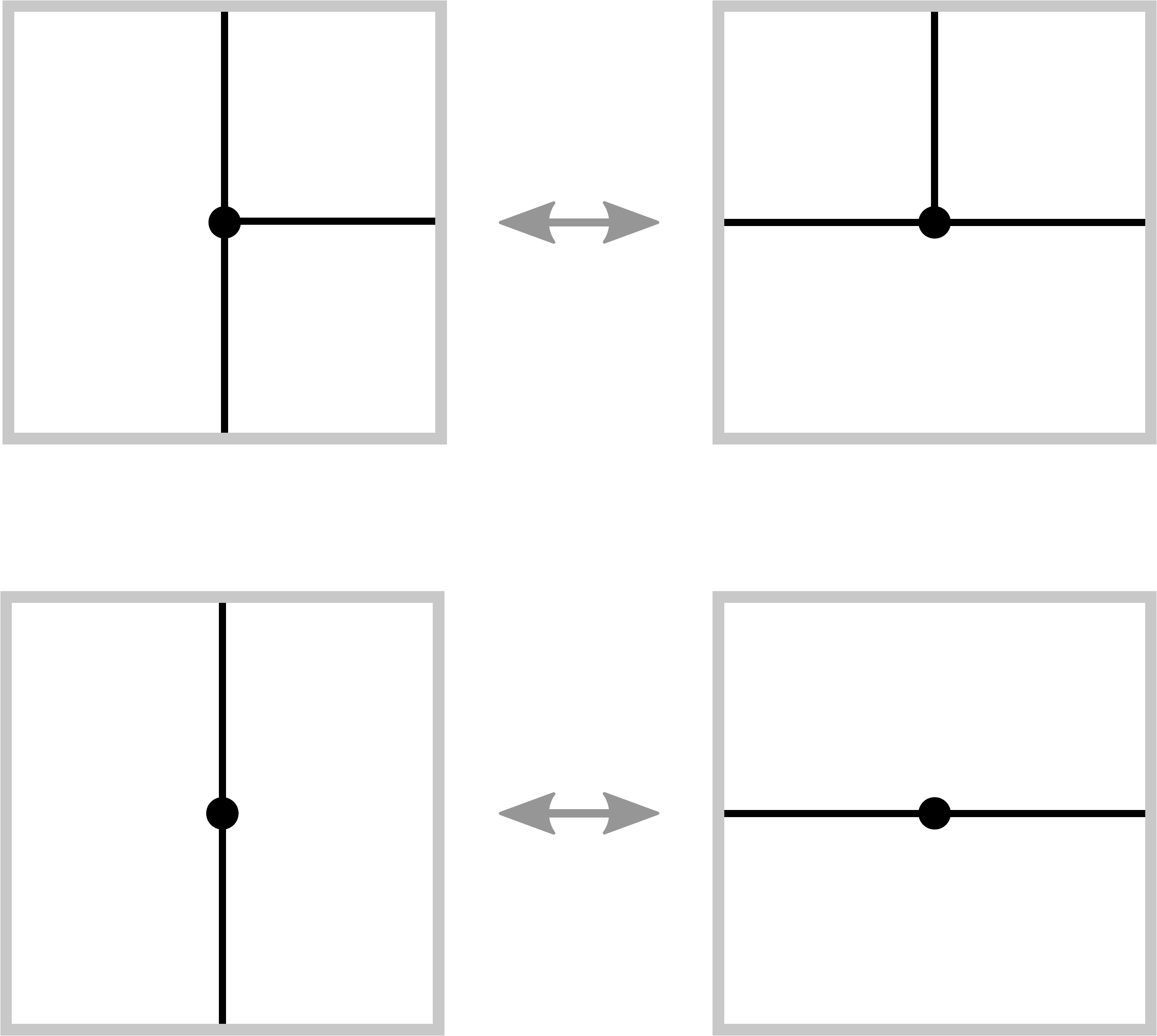}
\caption{(above) A-rotate operations. (below) A-flip operations. \label{fig:fliprot}}
\end{figure}

We make the simple observation that both an A-rotate and an A-flip can be simulated with one, respectively two flip operations as defined in~\textsection\,\ref{sec_r_intro}.

Let $\Rc_1$ and $\Rc_2$ be two valid states of Rectangulation, reachable from the initial state given by a permutation point set $X$. We denote by $d(\Rc_1, \Rc_2)$ the shortest number of A-rotate and A-flip operations that transform $\Rc_1$ to $\Rc_2$. The Flip Diameter problem studied by Ackerman et al.\ asks for the quantity $\diam(X) = \max_{\Rc_1, \Rc_2}{\left\{d(\Rc_1, \Rc_2)\right\}}$, where the maximum is over all valid states of Rectangulation reachable from the initial state determined by $X$. Let $\OPTR(X) = \min_\Ac \left\{ \Ac(X) \right\}$, i.e.\ the smallest cost of any algorithm for the Rectangulation problem with input $X$, as defined in~\textsection\,\ref{sec_r_intro}. We make the following easy observation.

\begin{theorem} \label{old_flip}
 For an arbitrary permutation $X$, we have
$$ \OPTR(X) \leq 2 \cdot \diam(X). $$
\end{theorem}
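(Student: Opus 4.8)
The plan is to instantiate the definition of $\diam(X)$ at the pair formed by the initial state and a suitable end state of the Rectangulation problem, and then to convert the resulting bounded-length sequence of $A$-rotate/$A$-flip operations into a sequence of flips of the kind defined in~\textsection\,\ref{sec_r_intro}, paying only a factor of $2$. For this I first need both endpoints of the instantiation to be legal. The initial state $(P_0,L_0)$ determined by $X$ is a valid state and is (trivially) reachable from itself. Moreover, there is at least one \emph{end} state $(P^*,L^*)$ reachable from $(P_0,L_0)$ through valid flips: since $[n]\times[n]$ is a satisfied superset of $X$, \Cref{thm2} together with the remark following it shows that $([n]\times[n])\cup M$ is the point set of a valid, reachable end state, so in particular $\OPTR(X)$ is well defined. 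Thus both $(P_0,L_0)$ and $(P^*,L^*)$ belong to the family of valid states reachable from the initial state determined by $X$, which is exactly the family over which $\diam(X)$ ranges.

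By the definition of $\diam$ this yields $d\big((P_0,L_0),(P^*,L^*)\big)\le\diam(X)$, i.e.\ a sequence of at most $\diam(X)$ $A$-rotate and $A$-flip operations transforming $(P_0,L_0)$ into $(P^*,L^*)$. I would then invoke the observation recorded just above the theorem: an $A$-rotate is simulated by a single flip and an $A$-flip by two flips (in the sense of~\textsection\,\ref{sec_r_intro}), using the freedom to delete segments after a flip to land exactly in the state the $A$-operation would produce. Concatenating these simulations converts the sequence above into a sequence of at most $2\cdot\diam(X)$ flips that carries $(P_0,L_0)$ to $(P^*,L^*)$ through valid intermediate states. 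This sequence is precisely an algorithm $\Ac$ for the Rectangulation problem on input $X$ with $\Ac(X)\le 2\cdot\diam(X)$, and since $\OPTR(X)$ is the minimum cost of any such algorithm, $\OPTR(X)\le 2\cdot\diam(X)$ follows.

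The two points that actually carry weight are: (a) the flip-simulation of each $A$-operation must reproduce \emph{exactly} the state that the $A$-operation produces, so that the simulations compose into a single flip sequence terminating at the state $(P^*,L^*)$ itself -- this is where the liberty of removing segments after a flip in our model is needed, and where one has to be slightly careful that the simulating flips add no extra points beyond those the $A$-operation adds; and (b) we must know that some reachable end state exists at all to serve as the target, which is handed to us by the equivalence with Satisfied Superset (\Cref{thm1,thm2}). Neither is a genuine obstacle given the tools already developed, consistently with the theorem being an ``easy observation''.
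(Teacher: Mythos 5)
Your proof is correct and follows essentially the same approach as the paper's: instantiate $\diam(X)$ at the pair consisting of the initial state and an end state, then simulate each $A$-rotate by one flip and each $A$-flip by two flips. The paper routes this through an auxiliary quantity $\diam'(X) = \max_{\Rc_1,\Rc_2} d'(\Rc_1,\Rc_2)$ and the inequalities $\OPTR(X) \le \diam'(X) \le 2\,\diam(X)$, while you convert the $A$-sequence to a flip sequence directly, but these are the same argument in different notation. Your explicit check that a reachable end state exists (via the dense satisfied superset $[n]\times[n]$ and the remark after \Cref{thm2}) is a detail the paper leaves implicit, and it is a welcome addition.
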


\begin{proof}
 Let $d'(\Rc_1, \Rc_2)$ be the shortest number of flip operations (according to the definitions in~\textsection\,\ref{sec_r_intro}) that transform $\Rc_1$ to $\Rc_2$. By the observation that two flip operations can simulate an A-rotate or an A-flip, we have that $d'(\Rc_1, \Rc_2) \leq 2 \cdot d(\Rc_1, \Rc_2)$, for all $\Rc_1, \Rc_2$.
 
Let $\diam'(X) = \max_{\Rc_1, \Rc_2}{\left\{d'(\Rc_1, \Rc_2)\right\}}$, where the maximum is over all valid states of Rectangulation reachable from the initial state determined by $X$. By the previous observation, we have $\diam'(X) \leq 2 \cdot \diam(X)$. Furthermore, 
$\diam'(X) \geq \OPTR(X)$, since $\OPTR(X)$ refers to the shortest number of flips between two particular valid states of Rectangulation, the initial state, and an end state. The claim follows. \qedd
\end{proof}

We give a new interpretation of a result of Ackerman et al. They prove the following.

\begin{theorem}[{\protect\cite[\textsection\,3]{Ackerman}}] \label{reverse}
There exists a permutation $X$ of size $n$ such that \ $\diam(X) = \Omega(n \log n)$.
\end{theorem}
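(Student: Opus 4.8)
The plan is to exploit the equivalences established earlier in the paper to reduce the Flip Diameter lower bound to a known lower bound for the BST problem. The starting observation is Theorem~\ref{old_flip}, which gives $\diam(X) \geq \OPTR(X)/2$, so it suffices to exhibit a permutation $X$ of size $n$ with $\OPTR(X) = \Omega(n \log n)$. By the equivalence of Rectangulation with Satisfied Superset (Theorems~\ref{thm1} and~\ref{thm2}), and the equivalence of Satisfied Superset with the BST problem (Theorem~\ref{lem1}), we have $\OPTR(X) = \Theta(\OPT(X))$, where $\OPT(X)$ is the optimum BST cost for serving the access sequence corresponding to $X$. Hence the task becomes: find a permutation $X$ of size $n$ for which every BST algorithm must pay $\Omega(n \log n)$.

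The second step is to invoke a classical $\Omega(n \log n)$ lower bound for the offline BST problem. The standard choices are the Wilber bounds~\cite{Wilber}, or more simply the observation that the \emph{bit-reversal permutation} has BST cost $\Omega(n \log n)$ (this follows, e.g., from Wilber's first (``interleave'') bound, and is also recovered by the Satisfied Superset lower bound of DHIKP). Concretely, I would let $X$ be the bit-reversal permutation on $[n]$ (for $n$ a power of two), cite the known fact that $\OPT(X) = \Omega(n \log n)$, and chain the equivalences: $\diam(X) \geq \tfrac{1}{2}\OPTR(X) = \Omega(\OPT(X)) = \Omega(n \log n)$. Note that the direction of the inequalities is exactly the one we need — Theorem~\ref{old_flip} lower-bounds $\diam$ by $\OPTR$, and the reductions of Theorems~\ref{thm1}--\ref{thm2} and~\ref{lem1} preserve cost up to constant factors in both directions, so a lower bound on $\OPT(X)$ transfers to a lower bound on $\OPTR(X)$ and thence on $\diam(X)$.

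One subtlety to check carefully is that $\diam(X)$, as defined by Ackerman et al., is the maximum over \emph{all pairs} of reachable valid states, whereas $\OPTR(X)$ concerns the specific pair (all-vertical initial state, some all-horizontal end state). This is precisely the content of the proof of Theorem~\ref{old_flip}: since $\diam'(X) \geq \OPTR(X)$ (the diameter with respect to our flip operation is at least the distance between the two particular states we care about) and $\diam'(X) \leq 2\diam(X)$, we get $\OPTR(X) \leq 2\diam(X)$, and we are only using this one direction. A second point worth a remark is that one must ensure the relevant end state is \emph{reachable} from the initial state of $X$ so that it is a legitimate point in the maximization defining $\diam$; reachability of some valid all-horizontal end state is guaranteed by the existence of \emph{any} algorithm for Rectangulation (equivalently, by the constructive reduction in Theorem~\ref{thm2} applied to any satisfied superset of $X$).

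The main obstacle, if any, is purely expository: making sure the constant factors and the direction of each reduction line up, and citing a clean, self-contained $\Omega(n\log n)$ BST lower bound. There is no new combinatorial difficulty — the whole point is that the lower bound is now an \emph{immediate corollary} of the equivalences developed in Sections~\ref{sec2} and~\ref{sec_rect} together with standard BST lower bounds, which is precisely the ``new interpretation'' the paper advertises.
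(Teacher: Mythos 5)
Your proposal is correct and takes essentially the same approach as the paper: invoke Wilber's $\Omega(n\log n)$ BST lower bound on the bit-reversal permutation, transfer it to $\OPTR$ via Theorem~\ref{thm1} and Theorem~\ref{lem1}, and then apply Theorem~\ref{old_flip} to conclude $\diam(R_n) = \Omega(n\log n)$. The only cosmetic difference is that the paper uses only the one needed direction (Theorem~\ref{thm1}, Rectangulation to Satisfied Superset) rather than the full two-sided equivalence $\OPTR(X) = \Theta(\OPT(X))$ you invoke.
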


The proof of Ackerman et al.\ uses the bitwise reversal permutation $R_n$ (see~\cite{Ackerman} for the definition) and argues about certain geometric constraints that hold for any possible sequence of A-rotate and A-flip operations on rectangulations constrained by this permutation.

We give a very simple alternative proof: It was shown by Wilber in 1989~\cite{Wilber} that for every BST algorithm $\Ac$ it holds that $\Ac(R_n) = \Omega(n \log {n})$. Using the equivalences between the BST problem and Satisfied Superset (Theorem~\ref{lem1}),  respectively, between Satisfied Superset and Rectangulation (Theorem~\ref{thm1}), it follows that $\OPTR(R_n) = \Omega(n \log {n})$. The application of Theorem~\ref{old_flip} finishes the proof.

Ackerman et al.\ raise the open question of computing the average of $\diam(X)$ over all permutation point sets $X$ of size $n$. A simple argument shows that this value is $\Omega(n \log n)$.

\begin{theorem} \label{random}
For a random permutation $X$ of size $n$, we have $\,\mathbb{E}_X\left[\diam(X)\right] = \Omega(n \log n)$.
\end{theorem}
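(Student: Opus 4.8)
The plan is to reduce the claim to a known lower bound on the \emph{expected} BST cost of a random permutation, going through the chain of equivalences already established in the paper. First I would recall that, by the same simulation argument used in the proof of Theorem~\ref{reverse}, for \emph{every} permutation point set $X$ we have $\diam(X) \geq \tfrac{1}{2}\OPTR(X)$: indeed, $\diam(X) \geq \diam'(X) \geq \OPTR(X)$ only gives one direction cheaply, but combining Theorem~\ref{old_flip} ($\OPTR(X) \leq 2\diam(X)$) directly yields $\diam(X) \geq \tfrac12 \OPTR(X)$. Then, by Theorem~\ref{lem1} and Theorem~\ref{thm1} (together with Theorem~\ref{thm2} for the converse), $\OPTR(X) = \Theta(\OPT^{\mathrm{BST}}(X))$, where $\OPT^{\mathrm{BST}}(X)$ denotes the optimum offline BST cost for serving $X$. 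Hence $\diam(X) = \Omega(\OPT^{\mathrm{BST}}(X))$, and it suffices to show $\mathbb{E}_X[\OPT^{\mathrm{BST}}(X)] = \Omega(n\log n)$ for $X$ a uniformly random permutation of size $n$.

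The key step is the expectation bound on the BST optimum. The cleanest route is to invoke Wilber's first (interleave) lower bound~\cite{Wilber}, which asserts that $\OPT^{\mathrm{BST}}(X) = \Omega\big(\mathrm{IB}(X)\big)$ for a fixed reference tree, where $\mathrm{IB}(X)$ counts interleaves through the nodes of that tree. For a random permutation one computes $\mathbb{E}_X[\mathrm{IB}(X)] = \Omega(n\log n)$ directly: fix a balanced reference tree of depth $O(\log n)$; for each internal node $v$ at depth $d$ governing a contiguous key-interval split into a left part and a right part, the contribution of $v$ to $\mathrm{IB}(X)$ is the number of times, in temporal order, the accessed key alternates between $v$'s left subtree and its right subtree. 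Restricting to the subsequence of accesses falling in $v$'s interval, each such access lands left or right essentially uniformly and independently, so the expected number of alternations is linear in the number of accesses in that interval. Summing over one level of the tree gives $\Omega(n)$, and summing over all $\Omega(\log n)$ levels gives $\mathbb{E}_X[\mathrm{IB}(X)] = \Omega(n\log n)$. Linearity of expectation, combined with the pointwise inequality $\diam(X) = \Omega(\OPT^{\mathrm{BST}}(X)) = \Omega(\mathrm{IB}(X))$, then yields $\mathbb{E}_X[\diam(X)] = \Omega(n\log n)$.

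An alternative to computing $\mathbb{E}[\mathrm{IB}(X)]$ from scratch is to appeal to an \emph{entropy-style} lower bound: it is classical that the offline BST optimum is at least (a constant times) $\sum_i \log\big(\text{``working-set'' or distance measure}\big)$-type quantities, and for a random permutation any of the standard such bounds (e.g.\ the one based on the number of inversions, or on Wilber's bound) evaluates to $\Theta(n\log n)$ in expectation; one can even use the trivial fact that serving \emph{any} permutation costs $\Omega(n)$ together with a more refined counting argument. I would go with the Wilber interleave bound since it is the most self-contained and the per-node alternation count is a textbook balls-into-bins computation.

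The main obstacle I anticipate is purely bookkeeping rather than conceptual: making the ``each access goes left or right roughly uniformly and independently'' statement precise for a random permutation. One must be careful that the set of keys in a node's interval is itself random, and that conditioning on which keys fall in the interval, their \emph{temporal order} is a uniformly random permutation of that set, so the left/right pattern along that subsequence is a uniformly random binary string of the appropriate composition; the expected number of $01$/$10$ transitions in such a string is $\Theta(\text{length})$ as long as both parts are non-trivial, which holds for all nodes except the $O(1)$ near the leaves whose contribution is negligible. Handling these edge cases and confirming the constant is positive is routine, so no genuine difficulty is expected. I would therefore structure the proof as: (1) cite the equivalence chain to reduce to $\mathbb{E}_X[\OPT^{\mathrm{BST}}(X)]$; (2) apply Wilber's bound to a balanced reference tree; (3) do the balls-into-bins expectation computation level by level; (4) conclude by linearity of expectation.
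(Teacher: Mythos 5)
Your proof follows the paper's route exactly: lower-bound $\diam(X)$ by $\OPTR(X)$ via Theorem~\ref{old_flip}, then by the BST optimum via the reduction chain (Theorems~\ref{thm1} and \ref{lem1}), and finally invoke the $\Omega(n\log n)$ expected-cost bound for BSTs on random permutations. The only divergence is that the paper simply cites that last bound from Wilber and~\cite{FOCS15}, whereas you re-derive it by computing the expectation of Wilber's interleave bound level-by-level --- a correct but optional elaboration of a cited fact.
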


\begin{proof}
It is known~\cite{Wilber, FOCS15} that $\mathbb{E}_X\left[\Ac(X)\right] = \Omega(n \log {n})$ for any BST algorithm $\Ac$. For any algorithm $\Ac'$ for Rectangulation, there is a BST algorithm $\Ac$ such that $\Ac(X) = O(\Ac'(X))$ for all $X$ (Theorem~\ref{lem1} and Theorem~\ref{thm1}). Thus, $\mathbb{E}_X\left[\Ac'(X)\right] = \Omega(n \log {n})$ for every Rectangulation algorithm $\Ac'$. Since $\Ac'(X) \leq 2\cdot \diam(X)$ for some algorithm $\Ac'$ for Rectangulation (Theorem~\ref{old_flip}), the claim follows.
\end{proof}

It would be of interest to describe natural classes of inputs $X$ of size $n$, for which the value of $\diam(X)$ is small, i.e.\ linear in $n$. For the BST problem there has been extensive research on query sequences that can be served with linear total cost~\cite{ST85, in_pursuit, FOCS15, landscape}. For any such sequence $X$ we obtain (via Theorem~\ref{lem1} and Theorem~\ref{thm2}) that $d'(V,H) = O(n)$, where $V$ is the Rectangulation initial state determined by $X$, and $H$ is any Rectangulation valid end state reachable from $V$. 

The claim that easy (linear cost) permutations for the BST problem are also easy (linear cost) for the Flip Diameter problem follows immediately, if the following two conjectures hold. 

\begin{conjecture}
For any two Rectangulation states $\Rc_1$ and $\Rc_2$, we have $d(\Rc_1,\Rc_2) = O\left(d'(\Rc_1,\Rc_2)\right)$.
\end{conjecture}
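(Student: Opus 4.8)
The plan is to prove the missing converse of the inequality $d'(\Rc_1,\Rc_2)\le 2\,d(\Rc_1,\Rc_2)$ established inside the proof of \Cref{old_flip}: namely that an optimal flip-sequence in the sense of \textsection\,\ref{sec_r_intro} can be simulated by A-rotate and A-flip operations with only constant overhead. Fix an optimal flip-sequence $\Rc_1=\sigma_0,\sigma_1,\dots,\sigma_{d'}=\Rc_2$ realizing $d'(\Rc_1,\Rc_2)$, and replace each step $\sigma_{i-1}\to\sigma_i$ by a block of $k_i$ A-operations; the goal is then to show $\sum_i k_i=O(d'(\Rc_1,\Rc_2))$. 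Combined with the equivalences \Cref{lem1,thm1,thm2}, this would yield $\diam(X)=\Theta(\OPTR(X))$, hence that a permutation is easy (linear-cost) for the Flip Diameter problem precisely when it is easy for the BST problem.

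First I would dissect a single flip $\langle a,b\rangle$ (say with $[a,b]$ horizontal): up to the removals that follow it, it splits the vertical segment through $a$ at $a$, splits the vertical segment through $b$ at $b$, and inserts $[a,b]$. When $a$ and $b$ are already in ``extensible'' position this is a bounded number of A-flips and A-rotates, which one checks by a case analysis on the local configuration (which of the two vertical segments exist, whether $a$ or $b$ is a margin point), verifying elbow-freeness and non-crossing of every intermediate state; here the A-operation definitions of \textsection\,\ref{sec_rect} apply to the intermediate states verbatim since $\sigma_{i-1},\sigma_i$ are reachable. The two phenomena that block a purely local constant-cost simulation are: (a) a flip may insert a wall that traverses many existing rectangles, so the split-and-rejoin work must be ``walked'' across the wall; and (b) a flip may be followed by $\Omega(n)$ segment removals, whereas no A-rotate or A-flip decreases the number of segments without adding others. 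Phenomenon (b) I would control by amortization: since $P$ only grows in the Rectangulation problem, every removed segment was created either in $L_0$ ($O(n)$ of these) or by a split in an earlier flip ($O(1)$ per flip), so the total number of removal events over the whole sequence is $O(n+d'(\Rc_1,\Rc_2))$, and each removal of a now-redundant vertical segment should be performed not in isolation but interleaved with an A-operation the simulation must carry out anyway for a nearby insertion, so that it costs $O(1)$ amortized.

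The crux — and, I expect, the reason the statement is only stated as a conjecture — is bounding the ``walking'' cost in (a) and making the interleaving in (b) rigorous, i.e.\ proving $\sum_i k_i=O(d'(\Rc_1,\Rc_2))$. This amounts to showing that the A-operation graph on reachable Rectangulation states is a constant-factor spanner of the flip graph, strengthening the easy direction of \Cref{old_flip}, and I would expect it to require a global potential (for instance one based on the number of maximal segments of the current state, or on a measure of ``alignment'' between the current state and the target $\Rc_2$) rather than a step-by-step argument, since a single flip can legitimately be a long-range operation that any A-operation simulation must amortize against several flips elsewhere in the sequence. An alternative would be to reduce the general pair $(\Rc_1,\Rc_2)$ to the all-vertical/all-horizontal pair, where \Cref{thm1,thm2} give a clean handle via Satisfied Superset, but it is not clear that such a reduction preserves distances up to constant factors.
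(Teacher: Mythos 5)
The statement in question is labeled a \emph{conjecture} in the paper, and the paper gives no proof; it is left as an open problem. Your write-up arrives at the same conclusion: you set out a simulation scheme (replacing each flip $\langle a,b\rangle$ by a block of A-rotates and A-flips and bounding $\sum_i k_i$), you correctly identify the two obstructions --- (a) a single flip can insert a long wall that must be ``walked'' across many rectangles, and (b) a single flip can be followed by $\Omega(n)$ segment removals that no A-operation performs for free --- and you then explicitly concede that making the interleaving rigorous and bounding the walking cost is precisely what is missing, and is likely why the paper leaves it as a conjecture. So there is no disagreement between your proposal and the paper: both treat the statement as open. But a plan with an acknowledged crux is not a proof, and the step you flag as the crux is exactly where the argument stops.

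Two concrete cautions for when you try to push the plan through. First, the flip of \textsection\,\ref{sec_r_intro} is not reversible: $P$ only ever grows, so $d'(\Rc_1,\Rc_2)$ is asymmetric and is finite only when $\Rc_2$'s point set contains $\Rc_1$'s, whereas A-rotate and A-flip are reversible and $d$ is a genuine metric on reachable rectangulations. Any simulation argument must respect this asymmetry; in particular the reduction to the all-vertical/all-horizontal pair that you float at the end already founders on it, since not every pair of reachable states sits on a flip-geodesic between those two extremes. Second, your amortization for (b) charges each removal to ``an A-operation the simulation must carry out anyway for a nearby insertion,'' but a long chain of removable vertical segments can be geometrically isolated from the insertions in the current block, in which case there is no nearby A-operation to interleave with and the A-simulation may have to spend work proportional to the chain length just to restore elbow-freeness. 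The bound $O(n+d')$ on the \emph{number} of removal events does not by itself bound the A-\emph{cost} of realizing them. You rightly suggest a global potential is needed, but you neither define one nor show it pays for these chains, so the key inequality $\sum_i k_i = O\bigl(d'(\Rc_1,\Rc_2)\bigr)$ remains unproved.
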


\begin{conjecture}
For any two Rectangulation states $\Rc_1$ and $\Rc_2$, we have $d'(\Rc_1,\Rc_2) = O\left(d'(H,V)\right)$.
\end{conjecture}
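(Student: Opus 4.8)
The plan is a detour argument that reduces the conjecture to a statement about the flip-eccentricity of the all-vertical rectangulation. First, following the remark that Ackerman et al.\ work with rectangulations as segment sets, I identify a Rectangulation state with its segment set and read $d'$ as the reconfiguration distance in which a flip may be applied in either direction (the natural reading, matching the symmetric A-operation distance $d$ underlying $\diam$). Flip sequences then concatenate, so $d'(\Rc_1,\Rc_2)\le d'(\Rc_1,V)+d'(V,\Rc_2)$, and the conjecture reduces to the single-target statement
\[
d'(\Rc,V)=O\bigl(d'(V,H)\bigr)\qquad\text{for every rectangulation }\Rc\text{ reachable from }V,
\]
where $H$ is a cheapest reachable end state (so $d'(V,H)=\OPTR(X)$). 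For "easy" inputs $X$, where $d'(V,H)=O(n)$, this asserts that every reachable rectangulation lies within $O(n)$ flips of $V$.

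To bound $d'(\Rc,V)$ I would route $\Rc$ up to some all-horizontal end state $H_\Rc$ and then back down to $V$. That such motions exist and never get stuck is exactly the type of statement established in Theorem~\ref{thm2}: the ``missing-segment / left-extensibility'' case analysis there, together with \Cref{prop: sat and manhattan}, shows a greedy completion of any valid state to an all-horizontal one, and the reverse motion --- flipping in the vertical segment below each non-margin point and deleting horizontal segments --- is handled symmetrically. Charging each flip to one of its new non-margin endpoints bounds $d'(\Rc,V)$ by the number of points and segments touched, so everything comes down to showing that, for every reachable $\Rc$, the completion to $H_\Rc$ and the descent back to $V$ together use only $O(d'(V,H))$ flips.

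That last point is the main obstacle and, I believe, where the real difficulty lies: a priori a reachable rectangulation $\Rc$ can be ``en route'' to a wasteful end state (a huge satisfied superset of $X$), and such an $\Rc$ may be far from $V$ and from every cheap end state, making the naive completion too expensive. One therefore needs to descend from an arbitrary reachable $\Rc$ to $V$ (or ascend to a cheap $H$) \emph{without committing further waste} --- preferentially deleting segments not ``forced'' by $X$ --- and the delicate part is that deleting a segment to simplify a rectangulation can create an elbow at a neighbouring point and trigger a cascade of repairs. I would look for an invariant (a potential counting ``unforced'' structure, or a notion of \emph{lean} reachable rectangulation whose segments are exactly what $X$ forces) that caps such a cascade at $O(d'(V,H))$ flips; pinning down that invariant is the crux. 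In parallel I would push the attack to the BST side: a reachable Rectangulation state should correspond to a partial execution of a BST algorithm on $X$, and $d'(\Rc_1,\Rc_2)$ to the cost of reconciling two partial executions, so the conjecture would follow from a statement of the form ``any two partial BST executions on $X$ can be merged at cost $O(\OPTR(X))$'', which may be more tractable via the structural machinery of~\cite{FOCS15} and should itself suggest the correct definition of a lean state.
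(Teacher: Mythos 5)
The statement you were given is labeled a \emph{conjecture} in the paper, and the paper offers no proof of it: it is posed as an open question, alongside the companion conjecture relating $d$ to $d'$, and the authors only note what would follow \emph{if} both conjectures held (namely, that permutations easy for the BST problem are easy for Flip Diameter). So there is no ``paper's own proof'' to compare your argument against.

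Your proposal is, accordingly, not a proof and you do not claim it is. As far as it goes, the framing is sound: reading $d'$ as a symmetric reconfiguration distance (flips and their reverses are both available, since each flip adds one segment and the cleanup step removes segments, and both motions are legal), the triangle inequality $d'(\Rc_1,\Rc_2)\le d'(\Rc_1,V)+d'(V,\Rc_2)$ correctly reduces the conjecture to the single-target bound $d'(\Rc,V)=O(d'(V,H))$ for every reachable $\Rc$, and your identification of $H$ with a cheapest end state is consistent with the paper's surrounding discussion even though the conjecture is stated with an unspecified $H$. You also correctly locate the genuine difficulty: a reachable $\Rc$ may be partway toward a wasteful end state, so naively completing it upward and flipping back down can cost far more than $\OPTR(X)$, and one needs a descent (or ascent) that does not commit further waste while controlling the elbow-repair cascades that segment deletions can trigger. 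That is precisely the kind of potential-function / ``lean state'' invariant that would be needed, and it is not supplied here; nor does the paper supply it. The reinterpretation on the BST side (a reachable state as a partial BST execution, $d'$ as the cost of reconciling two partial executions) is a reasonable direction and is in the spirit of the paper's broader program, but it too is only a sketch of where a proof might come from, not a proof.

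In short: you have not proved the conjecture, but neither has the paper; your analysis is consistent with the paper's intent, your reduction is valid, and you have honestly and correctly identified the missing ingredient. The one thing worth flagging is a small gap in your own setup: the triangle-inequality reduction needs $d'$ to be genuinely symmetric on the set of \emph{reachable} states, which requires checking that the reverse of any flip-plus-cleanup step is itself realizable by a short flip sequence between valid states; this is plausible but you assert it rather than verify it.
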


The first conjecture claims that A-rotate and A-flip operations are essentially equivalent with our flip operation, and the second conjecture claims that the distance between the all-vertical and any all-horizontal state is asymptotically the longest of any distances. For instance, Ackerman et al.\ state the open question of whether $\diam(X)$ is linear, if $X$ is a \emph{separable permutation}. Using our recent result that separable permutations are linear-cost for the BST problem~\cite{FOCS15}, if the above conjectures hold, then we get an affirmative answer to the question of Ackerman et al. 

Based on Theorems~\ref{lem1}, \ref{thm1}, and \ref{thm2}, we know that our flip operation between rectangulations captures any possible BST algorithm. It would be interesting to give a characterization of the class of BST algorithms that are captured by the A-flip and A-rotate operations.

In Ackerman et al.~\cite[\textsection\,2]{Ackerman} it is shown that $\diam(X) = O(n \log{n})$, for all permutations $X$ of size $n$. The proof is constructive (i.e.\ an algorithm with worst-case $O(n \log n)$ operations). The proposed algorithm and its analysis are quite sophisticated, for instance, the proof relies on the Four color theorem. However, if we interpret this algorithm in the special case of transforming the all-vertical rectangulation to the all-horizontal rectangulation (i.e.\ for our Rectangulation problem), the output of the algorithm is rather simple: It corresponds to a static balanced binary search tree (whose cost for serving $X$ is clearly $O(n \log{n})$).

We know that A-flip and A-rotate operations can capture non-trivial BST algorithms (i.e.\ other than static trees), since Ackerman et al.\ show that the flip diameter of a diagonal point set is $O(n)$. In the language of binary search trees, this means that the sequence $S = (1,2,\dots,n)$ is accessed in time $O(n)$. An easy counting argument shows that such a bound cannot be achieved by a static BST (since in any BST of size $n$, a constant fraction of the nodes are at depth $\Omega(\log n)$). It is instructive to interpret the algorithm of Ackerman et al.~\cite[\textsection\,4]{Ackerman} given for this particular input in terms of BST rotations (the algorithm corresponds to a straightforward offline BST algorithm tailored for serving the access sequence $S$). In the BST world, such a bound is known to be achieved by several general-purpose algorithms, including Splay trees~\cite{tarjan_sequential} and Greedy~\cite{Fox11}.

\section{Consequences for Small Manhattan Network} \label{sec_manh}
In \textsection\,\ref{sec:ssp} we define the manhattan path between two points $x,y$. Further, we state that given an input permutation point set $X$, a set $Y \supseteq X$ is a solution for the Satisfied Superset problem, iff for all $a,b \in Y$, there is a manhattan path between $a$ and $b$ with respect to $Y$ (\Cref{prop: sat and manhattan}).

An obvious relaxation of the Satisfied Superset problem is to require manhattan paths only between pairs of points from the input point set $X$. This is the Small Manhattan Network problem, which is of independent interest.

More precisely, an algorithm $\Ac$ for Small Manhattan Network outputs, given a permutation point set $X$, a point set $Y \supseteq X$, such that for all $a,b \in X$, there is a manhattan path between $a$ and $b$ with respect to $Y$. The cost of $\Ac$, denoted $\Ac(X)$ is the size of the set $Y$. In the context of the BST problem, Small Manhattan Network was defined by Harmon~\cite{Harmon} as a lower bound for the BST optimum. In a geometric setting, the problem was studied by Gudmundsson, Klein, Knauer, and Smid (GKKS)~\cite{manhattan}.

Let $\OPTM(X) = \min_\Ac \left\{ \Ac(X) \right\}$, i.e.\ the optimum Small Manhattan Network solution for input $X$. Similarly, we define $\OPTS(X)$ the optimum Satisfied Superset solution (which is a constant factor away from the optimum BST solution by Theorem~\ref{lem1}, and from $\OPTR(X)$ by Theorems~\ref{thm1} and \ref{thm2}).

From the above definition of Small Manhattan Network as a less restricted Satisfied Superset problem, the following result is immediate.

\begin{theorem}[\cite{Harmon}] \label{thm4}
For an arbitrary permutation $X$ we have $$\OPTM(X) = O\left(\OPTS(X)\right).$$
\end{theorem}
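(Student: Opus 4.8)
The plan is to exploit directly the fact that Small Manhattan Network is a strict relaxation of Satisfied Superset, so that every feasible solution of the latter is automatically a feasible solution of the former. Concretely, I would start from an optimal Satisfied Superset solution $Y^*$ for the input $X$, i.e.\ a satisfied point set with $X \subseteq Y^* \subseteq [n]\times[n]$ and $|Y^*| = \OPTS(X)$.

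Next I would invoke \Cref{prop: sat and manhattan}: since $Y^*$ is satisfied, for \emph{every} pair $a,b \in Y^*$ there is a manhattan path between $a$ and $b$ with respect to $Y^*$. In particular this holds for every pair $a,b \in X$, since $X \subseteq Y^*$. Hence $Y^*$ already meets the defining requirement of a Small Manhattan Network solution on input $X$, so $Y^*$ is a valid output for that problem. This yields $\OPTM(X) \leq |Y^*| = \OPTS(X)$, which is in fact stronger than the claimed bound $\OPTM(X) = O(\OPTS(X))$; at the level of algorithms, the same argument shows that any algorithm $\Ac$ for Satisfied Superset is, verbatim, an algorithm for Small Manhattan Network with the same cost on every input.

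There is essentially no obstacle here; the only point requiring care is the direction of the relaxation — one must check that the manhattan-path condition used to \emph{define} Small Manhattan Network is exactly the condition appearing in Harmon's characterization of satisfied sets (\Cref{prop: sat and manhattan}), restricted from all pairs in $Y$ to all pairs in $X$. Once that is noted, the inclusion of the respective spaces of feasible solutions is immediate, and the cost bound follows by taking the minimum over the larger class of feasible solutions.
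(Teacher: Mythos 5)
Your proposal is correct and matches the paper's reasoning exactly: the paper states the result follows immediately from the definition of Small Manhattan Network as a relaxation of Satisfied Superset (via \Cref{prop: sat and manhattan}), which is precisely the argument you spell out. You even recover the stronger bound $\OPTM(X) \leq \OPTS(X)$ without constants, which is implicit in the paper's brief remark.
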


In \ref{sgr} we describe the Signed Satisfied Superset problem. Let us denote by $\OPTsi(X)$ the Signed Satisfied Superset optimum for $X$, i.e.\ the maximum between the $\boxslash$- and $\boxbslash$-Satisfied Superset optima: $\OPTsi(X) = \max\left\{\OPTSp(X),\ \OPTSm(X)\right\}$. Similarly, if $\OPTRp(X)$ and $\OPTRm(X)$ denote the optima for \elbr- and \elbl-Rectangulation, let us denote by $\OPTri(X)$ the Signed Rectangulation optimum for $X$, i.e.\ $\OPTri(X) = \max\left\{\OPTRp(X),\ \OPTRm(X)\right\}$. \\

We next describe the quantity $\MIR(X)$, the independent rectangle bound, a known lower bound on $\OPTS(X)$ defined by Harmon, DHIKP, and in a similar form by Derryberry, Sleator, and Wang~\cite{Harmon, DHIKP09, DSW}.

We use the definition of the independent rectangle bound $\MIR(X)$ from DHIKP~\cite{DHIKP09}. A pair of points $a,b \in X$ are called an \emph{unsatisfied rectangle} in $X$, if $a$ and $b$ are not on the same horizontal or vertical line, and the rectangle with corners $a$ and $b$ contains no point from $X \setminus \{a,b\}$. Two unsatisfied rectangles $(a,b)$ and $(c,d)$ are independent (in $X$), if none of the four corners of the rectangle with corners $a$ and $b$ is in the interior of the rectangle with corners $c$ and $d$ (and the other way around). Observe that a corner of one rectangle may be on the boundary (or even on the corner) of the other rectangle. The independent rectangle bound is defined as $\MIR(X) = \frac{1}{2}|\I(X)| + |X|$, where $\I(X)$ is the largest set of independent rectangles in $X$. It is known~\cite{Harmon, DHIKP09} that $\MIR(X) = O\left(\OPTS(X)\right)$, for all $X$.\\

We refer as SignedGreedy$(X)$ to the size of the SignedGreedy output for $X$, which is the union of the Greedy$_{\boxslash}$ and Greedy$_{\boxbslash}$ outputs (described in \textsection\,\ref{sgr}). 
The following results are known, except for the results involving the new quantity $\OPTri(X)$, which follow from the statements in \textsection\,\ref{sgr}.

\begin{theorem}[\cite{Harmon, DHIKP09}] \label{thm5}
For an arbitrary permutation $X$ we have (up to constant factors):

$$\mathrm{SignedGreedy}(X) ~=~ \OPTsi(X) ~=~ \OPTri(X) ~=~ \MIR(X) ~=~ \OPTM(X).$$

\end{theorem}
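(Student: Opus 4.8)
The plan is to show that all five quantities are pairwise within constant factors by assembling a short cycle of inequalities. Most of that cycle -- the classical chain $\mathrm{SignedGreedy}(X) = \OPTsi(X) = \MIR(X) = \OPTM(X)$ -- is already present (in spirit) in the work of DHIKP~\cite{DHIKP09} and Harmon~\cite{Harmon}; the genuinely new contribution is to splice the quantity $\OPTri(X)$ into this chain, and for that the work of \textsection\,\ref{sgr} does essentially everything. So the proof decomposes into (a) placing $\OPTri(X)$ next to $\OPTsi(X)$, and (b) recalling the classical equalities.

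For (a), the natural entry point is $\OPTsi(X) = \Theta(\OPTri(X))$. One direction is immediate from \Cref{thm: siRec gives siSat}: an optimal algorithm for \elbr-Rectangulation (resp.\ \elbl-Rectangulation) converts into an algorithm for $\boxslash$-Satisfied Superset (resp.\ $\boxbslash$-Satisfied Superset) of cost $O(\OPTRp(X))$ (resp.\ $O(\OPTRm(X))$), so $\OPTSp(X)=O(\OPTRp(X))$ and $\OPTSm(X)=O(\OPTRm(X))$, and taking the maximum of the two signed optima gives $\OPTsi(X) = O(\OPTri(X))$. For the reverse direction I would invoke the chain of equivalences recorded right after \Cref{thm: siSat gives siRec}, namely that the $\boxslash$-Satisfied Superset problem is equivalent, up to constants, to \elbr-, \elbdr-, and \elbul-Rectangulation, and symmetrically the $\boxbslash$-version to \elbl-, \elbdl-, and \elbur-Rectangulation; in particular $\OPTRp(X)=O(\OPTSp(X))$ and $\OPTRm(X)=O(\OPTSm(X))$, hence $\OPTri(X)=O(\OPTsi(X))$. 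Thus $\OPTri$ sits within a constant factor of $\OPTsi$ and it remains to re-derive the classical equalities.

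For (b) I would argue three equalities. First, $\mathrm{SignedGreedy}(X) = \Theta(\OPTsi(X))$: the SignedGreedy output is the union of the $\mathrm{Greedy}_{\boxslash}$ output and the $\mathrm{Greedy}_{\boxbslash}$ output, which by DHIKP are optimal, of sizes $\OPTSp(X)$ and $\OPTSm(X)$; hence $\mathrm{SignedGreedy}(X) \ge \max\{\OPTSp(X),\OPTSm(X)\} = \OPTsi(X)$, while $\mathrm{SignedGreedy}(X) \le \OPTSp(X)+\OPTSm(X) \le 2\,\OPTsi(X)$. Second, $\MIR(X) = \Theta(\OPTsi(X))$: split a maximum independent family $\I(X)$ of unsatisfied rectangles of $X$ into its two orientation classes; the larger class is a family of $\Omega(|\I(X)|)$ independent unsatisfied rectangles all of one orientation, say $\boxslash$, and a standard charging argument (each such rectangle forces a point of $Y$ in its interior or at a corner not in $X$, and independence makes the charges disjoint) shows every $\boxslash$-satisfied superset $Y\supseteq X$ has $|Y|\ge|X|+\Omega(|\I(X)|)$, so $\MIR(X)=O(\OPTSp(X))=O(\OPTsi(X))$; conversely $\OPTSp(X)=O(\MIR(X))$ and $\OPTSm(X)=O(\MIR(X))$ are known (the cost of $\mathrm{Greedy}_{\boxslash}$ is governed by a maximum \emph{independent} $\boxslash$-family, which is in particular an independent family of unsatisfied rectangles and so of size $\le|\I(X)|=O(\MIR(X))$), giving $\OPTsi(X)=O(\MIR(X))$. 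Third, $\OPTM(X) = \Theta(\OPTsi(X))$: on one side, Harmon shows the SignedGreedy output is a manhattan network for $X$, so $\OPTM(X)\le\mathrm{SignedGreedy}(X)=O(\OPTsi(X))$; on the other, the independent-rectangle lower bound applies verbatim to manhattan networks -- a manhattan path joining the two corners of an unsatisfied rectangle of $X$ must bend at a point not in $X$, and independent rectangles yield disjoint charges -- so $\OPTM(X)=\Omega(\MIR(X))=\Omega(\OPTsi(X))$. Chaining these inequalities closes the cycle.

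\textbf{Main obstacle.} The delicate inputs here are the two tight bounds that I would import as black boxes from~\cite{DHIKP09,Harmon}: that $\mathrm{Greedy}_{\boxslash}$ (equivalently $\OPTSp$) never exceeds $O(\MIR(X))$, i.e.\ that the greedy sweep does not over-produce points, and that \emph{every} manhattan network already has size $\Omega(\MIR(X))$; reproving either would essentially reprove the technical core of those papers. Within the genuinely new part, the only real pitfall is bookkeeping: one must invoke the equivalences of \textsection\,\ref{sgr} for precisely the elbow orientations for which they were established (\elbr/\elbdr/\elbul on the $\boxslash$ side, \elbl/\elbdl/\elbur on the $\boxbslash$ side), so that $\OPTRp$ and $\OPTRm$ -- defined through \elbr- and \elbl-Rectangulation -- line up with $\OPTSp$ and $\OPTSm$ respectively; swapping the orientations would break the correspondence.
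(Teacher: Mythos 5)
Your overall architecture matches the paper's: splice $\OPTri(X)$ into the classical chain via the Section~\ref{sgr} equivalences (and you correctly line up \elbr/\elbdr/\elbul with $\boxslash$ and \elbl/\elbdl/\elbur with $\boxbslash$), and import the remaining equalities from \cite{Harmon,DHIKP09}. The $\mathrm{SignedGreedy}(X)=\Theta(\OPTsi(X))$ step and the upper bound $\OPTM(X)=O(\MIR(X))$ via feasibility of the SignedGreedy output are also exactly how the paper reasons.

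However, there is a genuine flaw in the one place where you go beyond citation: your sketched charging argument for both $\MIR(X)=O(\OPTSp(X))$ and $\OPTM(X)=\Omega(\MIR(X))$ rests on the assertion that ``independence makes the charges disjoint.'' That assertion is false as stated. In the paper's definition, two unsatisfied rectangles are \emph{independent} when no corner of one lies in the interior of the other --- this allows the rectangles to overlap substantially, and a single added point of $Y$ (or a single bend of a manhattan path) could sit inside several pairwise-independent rectangles. So charging each rectangle to ``a point of $Y$ in its interior'' does not yield disjoint charges. The paper avoids this by a different scheme (Lemma~\ref{lem:help}): it first splits the independent family by orientation, then repeatedly selects a \emph{maximally wide} rectangle $\Rc$ in the current family, picks a vertical line $v$ crossing $\Rc$ that meets no other rectangle of the family (possible by maximality and the independence condition), and charges $\Rc$ to the unique horizontal edge $(p,q)$ of the manhattan path between the corners of $\Rc$ that crosses $v$; since $v$ is disjoint from all other rectangles, no other rectangle can ever be charged to $(p,q)$. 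This is the step the paper actually proves ``for completeness'' and calls ``somewhat simpler than the proof given by Harmon,'' so it is not a black box in the paper. If you intend to import it as one, you should not also offer the naive interior-charging explanation, which would fail if spelled out; if you intend to reprove it, you need the widest-rectangle/crossing-line argument (or something equivalent), not the direct interior charge.
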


For completeness, we give an alternative proof of the statement $\OPTM(X) = \Omega\big(\MIR(X)\big)$. (The other direction  $\OPTM(X) = O\big(\MIR(X)\big)$ follows from the fact that the SignedGreedy output for $X$ is on the one hand, a feasible Manhattan Network solution, and on the other hand, a constant-approximation for $\MIR(X)$.) The proof is inspired by a proof of a similar flavor given by DHIKP~\cite{DHIKP09} for a different statement, and is somewhat simpler than the proof given by Harmon~\cite{Harmon}. 
\begin{proof}
Let $\I_{\boxslash} (X)$ denote the largest set of independent rectangles in $X$ such that the two corner points from $X$ defining each rectangle are such that one is above and to the right of the other. Similarly, let $\I_{\boxbslash} (X)$ denote the largest set of independent rectangles in $X$ such that the two corner points from $X$ defining each rectangle are such that one is above and to the left of the other.

Similarly, define $\OPTML (X)$ the size of the smallest point set $Y \supseteq X$ such that for all pairs of points $a,b \in X$ such that $a$ is above and to the right of $b$, there is a manhattan path between $a$ and $b$ in $Y$. Let $\OPTMR (X)$ the size of the smallest point set $Y \supseteq X$ such that for all pairs of points $a,b \in X$ such that $a$ is above and to the left of $b$, there is a manhattan path between $a$ and $b$ in $Y$. The proof relies on the following lemma.

\begin{lemma} \label{lem:help}
For all permutations $X$, we have $$\OPTML(X) ~\geq~ |X| + |\I_{\boxslash}(X)|,$$   $$\OPTMR(X) ~\geq~ |X| + |\I_{\boxbslash}(X)|.$$
\end{lemma}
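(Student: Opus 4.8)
The plan is to prove the first inequality only; the second follows from it by the reflection of the plane across a vertical axis, which exchanges ``to the left of'' with ``to the right of'' and hence turns a $\boxbslash$-instance into a $\boxslash$-instance, carrying $\I_\boxbslash$ to $\I_\boxslash$ and $\OPTMR$ to $\OPTML$. So fix a permutation point set $X$, an arbitrary feasible solution $Y \supseteq X$ for $\OPTML$ (so that $Y$ contains a monotone manhattan path between every pair $a,b\in X$ with $a$ above and to the right of $b$), and a maximum family $\I=\I_\boxslash(X)$ of pairwise independent, $\boxslash$-oriented, unsatisfied rectangles of $X$. It suffices to construct an injection $\I\hookrightarrow Y\setminus X$; since $X\subseteq Y$, this gives $|Y| = |X| + |Y\setminus X| \ge |X| + |\I_\boxslash(X)|$, and taking $Y$ optimal yields the claimed bound.

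The first ingredient is that each rectangle of $\I$ is responsible for at least one Steiner point. Let $(a,b)\in\I$ with $a$ the top-right and $b$ the bottom-left corner. Since $a$ and $b$ share no coordinate, any monotone manhattan path $\pi_{ab}$ from $a$ to $b$ in $Y$ has at least one intermediate point; moreover every intermediate point of $\pi_{ab}$ lies in the closed rectangle spanned by $a$ and $b$, which — because $(a,b)$ is an unsatisfied rectangle of $X$ — contains no point of $X$ besides $a$ and $b$. Hence every intermediate point of $\pi_{ab}$ belongs to $Y\setminus X$. I then assign to $(a,b)$ a specific such point $s_{ab}$, the natural candidate being the neighbor of $b$ along $\pi_{ab}$, where $\pi_{ab}$ is pinned down to a canonical path (say a shortest, or the lexicographically least, manhattan path from $a$ to $b$ in $Y$).

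Injectivity is the heart of the argument. Suppose $s_{ab}=s_{a'b'}=:p$ for two distinct rectangles of $\I$. Since $p$ is the neighbor of $b$, either $p$ lies on the horizontal line through $b$ (the path leaves $b$ going right) or on the vertical line through $b$ (the path leaves $b$ going up), and symmetrically for $b'$. If both paths leave their bottom-left corner in the same direction, then $b$ and $b'$ share a coordinate, so $b=b'$ because $X$ is a permutation, and a short case check shows that two independent empty $\boxslash$-rectangles sharing this corner cannot have the same $b$-neighbor without some other point of $X$ falling into the interior of one of them. If the two paths leave in different directions, then $p=(b'.x,\,b.y)$ (say); the permutation property upgrades the relevant inequalities to strict ones, and independence (no corner of one rectangle lies in the interior of the other) is contradicted — \emph{unless} the two rectangles form a genuine ``plus-sign'' crossing, one wide and short, the other tall and narrow, overlapping in a central block, with $p$ at the corner of the overlap. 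This crossing case is the real obstacle: there the naive assignment genuinely collides, and it must be resolved either by a more robust choice of $s_{ab}$ together with an analysis of how a canonical (e.g.\ shortest) path is forced to behave across the crossing, or by a two-candidate scheme — charging each rectangle to the neighbor of $b$ or to the neighbor of $a$ on $\pi_{ab}$ and verifying Hall's condition for the resulting bipartite graph, where the permutation property again limits how many rectangles can share a given candidate. Once injectivity is established we get $|\I_\boxslash(X)|\le|Y\setminus X|$, completing the proof; the step I expect to be hardest is exactly this injectivity argument, and within it the ``plus-sign'' configuration of two independent rectangles.
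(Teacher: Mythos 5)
Your approach --- building a direct injection from $\I_{\boxslash}(X)$ into $Y\setminus X$ by assigning each rectangle the neighbor of its bottom-left corner on a canonical manhattan path --- is a genuinely different route from the paper's, but, as you yourself flag, it leaves the decisive injectivity step unresolved, and the gap is real. The ``plus-sign'' configuration you identify is not a degeneracy that can be waved off: two independent, empty $\boxslash$-rectangles, one wide and short, the other tall and narrow, crossing in a central block, really can both exit their bottom-left corners at the single point sitting at the corner of the overlap, and nothing in the proposal (a canonical-path tie-break, or the sketched two-candidate/Hall scheme) is carried far enough to rule this out. Until that is closed, the argument does not establish the bound.

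The paper sidesteps this obstacle entirely by charging rectangles not to \emph{points} of $Y\setminus X$ but to \emph{consecutive horizontal gaps} of $Y$ --- pairs $p,q\in Y$ on a common horizontal line with no $Y$-point strictly between --- of which there are at most $|Y|-|X|$, since the permutation $X$ occupies $|X|$ distinct horizontal lines. Rectangles are processed greedily in decreasing order of width: a maximally wide remaining rectangle $\Rc$ admits, by independence plus maximality, a vertical chord $v$ spanning $\Rc$ and avoiding every other remaining rectangle, and $\Rc$ is charged to the unique gap where its manhattan path crosses $v$. Since any later rectangle $\Rc'$ is charged to a gap contained in $\Rc'$, a coincidence with $\Rc$'s gap would force $v$ to meet $\Rc'$, contradicting the choice of $v$. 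It is precisely this triple --- gaps instead of points, width-greedy ordering, and the avoidance chord $v$ --- that dissolves the plus-sign collision. If you want to salvage a direct point-injection, you will likely need to import the width-greedy ordering and some analogue of the avoidance chord to decide which Steiner point each rectangle claims; as written, the proof is incomplete.
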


\ \\ \noindent As we have $\OPTM(X) ~\geq~ \max\left\{ \OPTML(X), \OPTMR(X) \right\}$, using Lemma~\ref{lem:help}, we obtain: $$\OPTM(X) ~\geq~ |X| + \frac{1}{2} |\I_{\boxslash}(X)| + \frac{1}{2} |\I_{\boxbslash}(X)| ~\geq~ |X| + \frac{1}{2} |\I(X)| ~=~ \MIR(X).$$

It remains to prove Lemma~\ref{lem:help}. We prove the first statement only, as the other statement is entirely symmetric.

Let $\Rc$ be a maximally wide rectangle in $\I_{\boxslash}(X)$, and let $v$ be a vertical line segment with endpoints on the opposite horizontal sides of $\Rc$, such that none of the other rectangles in $\I_{\boxslash}(X)$ intersect $v$. (Such a $v$ exists by the maximality of $\Rc$, and the independence-property of $\I_{\boxslash}(X)$.) 
To simplify the argument, take $v$ such that the $x$- coordinate of $v$ is fractional. Let $a$ and $b$ be the corners of $\Rc$, such that $a$ is above and to the right of $b$. Consider a manhattan path $P_{ab} = (a = x_1, \dots, x_k = b)$, where $x_i \in Y$, and $Y$ is the solution achieving $\OPTM(X)$. Let $p$ and $q$ be the unique neighboring points in $P_{ab}$ such that $p$ is to the left of $v$, and $q$ is to the right of $v$. (Observe that $p$ and $q$ are on the same horizontal line, and there is no point of $Y$ in the interior of $[p,q]$.) Charge the cost of the rectangle $(a,b)$ to the pair $(p,q)$. Remove $\Rc$ from $\I_{\boxslash}(X)$, and continue the process. Observe that the pair $(p,q)$ can not be charged again in the future (since no other rectangle intersects $v$). Furthermore, the number of pairs to which the rectangles can be charged is at most $\OPTML - |X|$. The claim follows. \qedd
\end{proof}

From Theorems~\ref{thm4} and \ref{thm5} it follows that the optimum solution of Small Manhattan Network is a lower bound for every BST solution, and that this lower bound is constant-approximable in polynomial time. Furthermore, it is shown by Harmon and DHIKP that the quantity $\OPTM(X)$ is asymptotically at least as large as two well-known lower bounds given by Wilber~\cite{Wilber} for the BST problem.\\

Equipped with these observations, we revisit the Small Manhattan Network problem studied by GKKS~\cite{manhattan} and reinterpret some of their results. GKKS show the following result.

\begin{theorem}[{\protect\cite[Thm.~1]{manhattan}}]
For any point set $X$ of size $n$ we have \ $\OPTM(X) = O(n \log n)$.
\end{theorem}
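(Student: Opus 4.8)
The plan is to exploit the chain of equivalences established in the paper and reduce the statement to a known upper bound for the BST problem. By Theorem~\ref{thm4} we have $\OPTM(X) = O(\OPTS(X))$, so it suffices to show $\OPTS(X) = O(n\log n)$ for every permutation point set $X$ of size $n$. By Theorem~\ref{lem1}, $\OPTS(X) = \Theta(\text{optimum BST cost for }X)$, and the optimum BST cost is at most the cost of \emph{any} concrete BST algorithm serving $X$. Taking the static balanced BST (e.g.\ a complete binary tree on $[n]$) and serving each access $x_i$ by walking down from the root and rotating it up, each access costs $O(\log n)$, so the total is $O(n\log n)$. Composing the transformations of Theorem~\ref{lem1} and Theorem~\ref{thm4} then yields $\OPTM(X) = O(n\log n)$ directly. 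One can even bypass the BST layer: serving each access by walking down a balanced tree already shows the optimum BST cost is $O(n\log n)$, which is a completely standard fact.

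Alternatively, and perhaps closer in spirit to GKKS's original geometric argument, I would give a direct recursive construction of a satisfied (indeed Manhattan-network) superset $Y$. Recurse on the $y$-coordinate: let $k$ be the point with median $y$-coordinate, and draw a horizontal ``separator line'' through it. Add to $Y$ the projection of $k$ onto every vertical line that carries an input point, i.e.\ the points $\{(x_j, y_k) : j\}$ — this costs $n$ points and ensures that any pair of input points lying on opposite sides of the separator has a Manhattan path routed through the separator line. Then recurse independently on the points above and below. The recursion has depth $O(\log n)$ and adds $O(n)$ points per level, giving $|Y| = O(n\log n)$. Verifying feasibility amounts to checking that, for a pair $(a,b)$ split by some separator at recursion level reached first, one can go from $a$ monotonically to its horizontal projection on the separator, slide along the separator (which is fully populated at integer $x$-coordinates carrying input points) to $b$'s projection, then descend to $b$ — and that all intermediate coordinates are monotone, which holds because $a$ and $b$ are on opposite sides of the separator in $y$.

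The main obstacle, in the self-contained geometric proof, is ensuring that the horizontal separator line is actually ``usable'' as a highway: the projected points must be present in $Y$ at exactly the $x$-coordinates one needs to turn at, and a Manhattan path requires monotonicity in \emph{both} coordinates simultaneously, so one has to be slightly careful that the turn point $(b.x, y_k)$ lies between $a$ and $b$ in the $x$-direction as well — which it does, since it shares $b$'s $x$-coordinate. A secondary subtlety is that points in $X$ may themselves lie on the separator only for the median point $k$; all other projected points are newly added, so the count is genuinely $n$ new points (minus the few already present), not more. Given the machinery already in the excerpt, however, the cleanest route is simply the one-line reduction: $\OPTM(X) = O(\OPTS(X)) = O(\text{BST-OPT}(X)) = O(n\log n)$ via a static balanced tree, and I would present that as the proof, mentioning the direct construction only as a remark.
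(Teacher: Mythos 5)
Your primary argument — $\OPTM(X) = O(\OPTS(X))$ by Theorem~\ref{thm4}, and $\OPTS(X) = O(n\log n)$ because a static balanced BST serves any permutation of length $n$ with $O(\log n)$ cost per access — is exactly the ``alternative way to prove'' this bound that the paper itself presents, and your direct recursive construction is a transposed restatement of the GKKS construction that the paper also sketches (splitting by a median line and projecting all points onto it, then recursing). Both routes are correct and coincide with the paper's treatment.
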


The solution given by GKKS is constructive (i.e.\ an algorithm that constructs a manhattan network with $O(n \log n)$ points). We can sketch it as follows: Split $X$ with a vertical line $v$ into two equal subsets, and add the projection of all points in $X$ to $v$ to the solution. Repeat the process recursively on the subsets of $X$ on the two sides of $v$. It is straightforward to verify both that the resulting point set is a valid Manhattan Network solution, and that its size is $O(n \log n)$.

We observe that an alternative way to prove $\OPTM(X) = O(n \log n)$ is simply to note that $\OPTS(X) = O(n \log n)$, and apply Theorem~\ref{thm4}. The upper bound on $\OPTS(X)$ follows from the observation that a BST access sequence can be served with logarithmic cost per access. In fact, it is not hard to see that the algorithm given by GKKS corresponds to the execution trace of a static balanced BST that serves access sequence $X$. (The vertical line $v$ corresponds to the root of the tree, that is touched by every access, and the same holds at every recursive level.) 

GKKS further show the following result.

\begin{theorem}[{\protect\cite[Thm.~4]{manhattan}}]
For some point set $X$ of size $n$ we have \ $\OPTM(X) = \Omega(n \log n)$.
\end{theorem}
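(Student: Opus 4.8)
The plan is to reuse the bit-reversal permutation $R_n$ (for $n$ a power of two; for general $n$ one pads to the nearest power of two), the same permutation that already appeared in Theorem~\ref{reverse}, and to certify $\OPTM(R_n) = \Omega(n\log n)$ by exhibiting a matching lower bound rather than reasoning about manhattan networks directly. Two such certificates are available from the material already in this section. The first is the remark recalled just above, that $\OPTM(X)$ is, up to a constant, at least as large as Wilber's first (interleave) lower bound for the BST problem on $X$. The second is Theorem~\ref{thm5}, which gives $\OPTM(X) = \Theta\big(\MIR(X)\big) = \Theta\big(|X| + |\I(X)|\big)$, so that it suffices to exhibit $\Omega(n\log n)$ pairwise independent unsatisfied rectangles in $R_n$, i.e.\ $|\I(R_n)| = \Omega(n\log n)$.

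The quick proof uses the first certificate. Wilber's original analysis~\cite{Wilber} evaluates the interleave bound on the bit-reversal permutation and shows it equals $\Theta(n\log n)$ — this is the classical instance on which that bound is tight. Since $\OPTM(R_n)$ is, up to a constant factor, at least this quantity, we conclude $\OPTM(R_n) = \Omega(n\log n)$. Note that one cannot shortcut this via $\OPTS(R_n) = \Omega(n\log n)$ together with Theorem~\ref{thm4}, because Theorem~\ref{thm4} bounds $\OPTM$ from above by $\OPTS$, not from below; the detour through a genuine lower bound — the interleave bound, which $\OPTM$ provably dominates — is exactly what makes the argument go through, and this is the point the section is meant to illustrate.

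For a self-contained argument I would instead produce the independent rectangles directly from the recursive structure of $R_n$. Writing $n = 2^k$, at level $j \in \{0,\dots,k-1\}$ the $x$-range splits into $2^j$ dyadic strips of width $n/2^j$; inside a fixed strip the left and right halves of its $x$-range interleave maximally in the $y$-coordinate (this maximal alternation is precisely what makes $R_n$ the hard instance), which yields $\Theta(n/2^{j+1})$ consecutive unsatisfied rectangles that straddle the strip's vertical bisector while staying within that strip in the $x$-direction. Rectangles from different strips at the same level are disjoint in $x$, and a coarse-level rectangle straddles a bisector lying strictly between the dyadic cells of any finer level, so it ``passes through'' those cells rather than poking a corner into their interiors; hence all the chosen rectangles are pairwise independent, and their total number is $\sum_{j=0}^{k-1} 2^j \cdot \Theta(n/2^{j+1}) = \Theta(n\log n)$. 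This gives $|\I(R_n)| = \Omega(n\log n)$, hence $\OPTM(R_n) = \Omega(n\log n)$ by Theorem~\ref{thm5}.

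The only real obstacle is the bookkeeping in this second route: verifying independence across levels means ensuring that every chosen rectangle has its four corners on block boundaries or on already-counted grid points, so that no corner of a coarse-level rectangle ever lands in the open interior of a fine-level one. This is routine but somewhat tedious, which is why I would present the interleave-bound argument as the primary proof and relegate the explicit rectangle count to a remark.
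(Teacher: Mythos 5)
Your primary argument (bit-reversal permutation $R_n$, Wilber's 1989 lower bound of $\Omega(n\log n)$, and the domination of Wilber's bound by $\OPTM$ via $\MIR$ and Theorem~\ref{thm5}) is exactly the route the paper takes, down to the same citation and the same instance. The additional self-contained sketch of counting $\Omega(n\log n)$ independent rectangles directly in $R_n$ is a reasonable supplement but not in the paper, and your caveat about not being able to lower-bound $\OPTM$ via $\OPTS$ and Theorem~\ref{thm4} is correctly observed.
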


The instance used to show this is (essentially) the bitwise reversal sequence $R_n$ mentioned in \textsection\,\ref{sec_rect}. Again, the result can be shown in an alternative way, observing that $\MIR(R_n) = \Omega(n \log n)$, a fact known to Wilber~\cite{Wilber} in 1989, and using the correspondence of Theorem~\ref{thm5}.

The correspondence between Small Manhattan Network and BST yields further results for the Small Manhattan Network problem. In particular, for the BST problem we have several fine-grained bounds on the cost of the optimum solution, such as dynamic finger, working set, or the traversal bound~\cite{ST85}. Results of this type give immediate upper bounds on the complexity of the Small Manhattan Network solution for inputs with particular structure. (Although some of these structures may seem unusual in a geometric setting.) 

Similarly to the result for Flip Diameter, we obtain the following.

\begin{theorem} \label{random_mn}
For a random point set $X$ of size $n$, the complexity of the Small Manhattan Network optimum is $\Theta(n \log n)$.
\end{theorem}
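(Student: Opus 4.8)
The plan is to bound $\mathbb{E}_X[\OPTM(X)]$ from above and below separately, in each case reducing to a statement already recorded in the paper.

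For the upper bound I would simply note that $\OPTM(X) = O(n\log n)$ for \emph{every} permutation point set $X$, so it follows a fortiori that $\mathbb{E}_X[\OPTM(X)] = O(n\log n)$. This is immediate from Theorem~\ref{thm4} combined with $\OPTS(X) = O(n\log n)$; the latter holds because a static balanced BST serves any access sequence with $O(\log n)$ cost per access, which by Theorem~\ref{lem1} produces a Satisfied Superset solution of size $O(n\log n)$. (Alternatively one can cite the GKKS upper bound~\cite{manhattan} verbatim.)

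For the lower bound the goal is $\mathbb{E}_X[\OPTM(X)] = \Omega(n\log n)$ for a uniformly random permutation $X$ of $[n]$. Here I would \emph{not} try to use $\OPTM(X) = \Omega(\OPTS(X))$ — that relation is exactly the open conjecture — but instead route through Wilber's first (interleave) bound $W_1(X)$: as noted before the theorem, Harmon and DHIKP show $\OPTM(X) = \Omega(W_1(X))$ for all $X$ (equivalently, via Theorem~\ref{thm5}, $\MIR(X) = \Omega(W_1(X))$), so it suffices to establish $\mathbb{E}_X[W_1(X)] = \Omega(n\log n)$, which is in essence how the known bound $\mathbb{E}_X[\Ac(X)] = \Omega(n\log n)$ of~\cite{Wilber, FOCS15} is obtained. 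Concretely: fix a perfectly balanced reference tree on $[n]$; for a node $v$ whose subtree has size $s_v$, the subsequence of $X$ restricted to the keys of that subtree is (by symmetry of the random permutation) a uniformly random permutation of those keys, so labelling each such access ``left'' or ``right'' according to whether its key is below or above that of $v$ yields, after deleting the lone occurrence of $v$ itself, a uniformly random arrangement of about $s_v/2$ ``left'' and about $s_v/2$ ``right'' symbols; such a string has $\Theta(s_v)$ symbol changes in expectation (each of the $s_v-1$ adjacent pairs differs with probability close to $1/2$), which is, up to an additive $O(1)$, the expected contribution of $v$ to $W_1(X)$. Summing over a balanced tree, $\sum_v \Theta(s_v) = \Theta(n\log n)$, which gives the claim and hence $\mathbb{E}_X[\OPTM(X)] = \Omega(n\log n)$.

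Putting the two bounds together yields $\mathbb{E}_X[\OPTM(X)] = \Theta(n\log n)$. I expect the only part with any content to be the lower bound, and even there the sole genuine computation is the elementary observation that a balanced random $0/1$ string contains a linear number of runs; all the structural input is inherited from Theorems~\ref{lem1}, \ref{thm4}, and \ref{thm5} and from the known domination of $W_1$ by $\OPTM$ (equivalently $\MIR$). A more self-contained but longer alternative for the lower bound would be to construct directly an independent family of $\Omega(n\log n)$ unsatisfied rectangles for a random $X$ by recursing on a balanced split of the key range, mirroring the bit-reversal analysis of~\cite{Wilber}; routing through $W_1$ is shorter.
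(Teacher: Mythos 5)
Your proposal is correct and matches the paper's (very terse) argument: the paper likewise routes the lower bound through the Wilber-type lower bound on $\OPTM$ (rather than through $\OPTS$, which would require the open conjecture), citing Wilber and its extension to random permutations, and takes the $O(n\log n)$ upper bound from the static balanced tree / GKKS construction. Your write-up simply fills in the standard computation of $\mathbb{E}[W_1(X)]=\Omega(n\log n)$ over a balanced reference tree, which the paper leaves to the references.
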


Since only the relative ordering of the points matters for Small Manhattan Network (and not the distances between points), by random point set we mean a point set in general position whose relative ordering corresponds to a random permutation. Again, the proof only needs the result of Wilber~\cite{Wilber} and its extension to permutations~\cite{FOCS15}.

Recently we showed~\cite{FOCS15} that for all permutations $X$ of size $n$ that forbid an arbitrary constant-size permutation pattern, it holds that $\MIR(X) = O(n)$. This yields the following observation.

\begin{theorem} \label{random_mn}
Every planar point set that avoids a fixed permutation pattern admits a manhattan network of linear complexity.
\end{theorem}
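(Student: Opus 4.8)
The plan is to derive this as an immediate corollary of Theorem~\ref{thm5} together with the cited bound $\MIR(X) = O(n)$ from~\cite{FOCS15}. The first step is the reduction from planar point sets to permutations: since only the relative (horizontal and vertical) ordering of the points enters the definition of a manhattan network, not the actual coordinates, a planar point set $X$ in general position is equivalent for our purposes to a permutation point set of the same size $n$. The phrase ``avoids a fixed permutation pattern'' then means exactly that the permutation associated with $X$ forbids some constant-size pattern $\sigma$. (If the given point set is not in general position, a standard perturbation argument reduces to this case; one should remark on this briefly.)

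Next I would invoke the result of~\cite{FOCS15} quoted just above the statement: for every permutation $X$ of size $n$ forbidding a constant-size pattern, $\MIR(X) = O(n)$. Combining this with Theorem~\ref{thm5}, which asserts $\OPTM(X) = \Theta(\MIR(X))$ (the chain also passes through $\OPTsi(X)$, $\OPTri(X)$, and $\mathrm{SignedGreedy}(X)$, but we only need the endpoints), we conclude $\OPTM(X) = O(n)$. By the definition of the Small Manhattan Network problem in~\textsection\,\ref{sec_manh}, $\OPTM(X) = O(n)$ means there is a point set $Y \supseteq X$ with $|Y| = O(n)$ such that every pair $a,b \in X$ is joined by a manhattan path with respect to $Y$; that is precisely a manhattan network of linear complexity, which is the claim.

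There is essentially no obstacle here: the theorem is a one-line consequence of machinery already in place. The only points requiring a word of care are (i) making the ``relative ordering suffices'' reduction explicit (the paper already flags this remark before Theorem~\ref{random_mn}'s sibling about random point sets), and (ii) ensuring the reader sees that the relevant implication direction of Theorem~\ref{thm5} is $\OPTM(X) = O(\MIR(X))$, which in turn follows (as noted in the proof accompanying Theorem~\ref{thm5}) from the fact that the SignedGreedy output is simultaneously a feasible manhattan network and a constant-factor approximation of $\MIR(X)$. No new estimates or constructions are needed.
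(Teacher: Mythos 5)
Your proof is correct and follows exactly the route the paper intends: the statement is presented in the text as an immediate corollary of the cited $\MIR(X)=O(n)$ bound from~\cite{FOCS15} together with the $\OPTM(X)=\Theta(\MIR(X))$ equivalence of Theorem~\ref{thm5}, and you have simply made that implicit chain explicit, including the right direction $\OPTM(X)=O(\MIR(X))$ and the relative-ordering reduction from point sets to permutations. No discrepancy with the paper's own reasoning.
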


\section{Consequences for the BST problem}\label{sec3}

The ``flip'' and ``tree-relax'' models of the BST problem (described in \textsection\,\ref{sec2}) give new interpretations of several well-studied concepts in the BST world. We list some preliminary observations and questions in this direction.

\paragraph{Upper bounds.}

For an arbitrary pair $(u,v)$ of points, where $u = (u.x, u.y)$, and $v = (v.x, v.y)$, let us define the \emph{height} of $(u,v)$ as $h(u,v) = |u.y - v.y|$, and the \emph{width} of $(u,v)$ as $w(u,v) = |u.x - v.x|$.  For an arbitrary monotone tree $T$, let $h(T)$ be the sum of heights of all edges in $T$, and let $w(T)$ be the sum of widths of all edges in $T$. 

Consider a BST access sequence $X$ of size $n$ (a permutation), and the corresponding treap $T$ on $X$, as well as the path $P$ on $X$ (both are defined in \textsection\,\ref{sec_tr}). Consider an edge-flip operation $(a \rightarrow b)$ in some tree $T'$ that adds the edge $(a,b)$ and removes the edge $(r,b)$, where $r$ is the parent of $b$ in $T'$. Let the resulting tree be $T''$. We make the following two observations: 
\begin{align*}
h(T') - h(T'') = h(r,b) - h(a,b) = h(r,a) &~~\geq~~ 1, \mbox{~~~~~~and} \\
w(T') - w(T'') = w(r,b) - w(a,b) = -w(r,a) &~~\leq~~ -1.
\end{align*}

In words, the total height strictly decreases, and the total weight strictly increases in every edge-relax operation. We also observe that $h(P) = n-1$ (in the end, every edge is of height 1). It follows that the quantities $H = h(T) - h(P) = h(T) - n + 1$, and $W = w(P) - w(T)$ are upper bounds on the cost of \emph{every} algorithm for the Tree Relaxation problem. 
Given $X$, both $W$ and $T$ can be easily computed. The bounds are however, not very strict, as both $W$ and $T$ can be as large as $\Theta(n^2)$. Nevertheless, for certain highly structured inputs, such as for permutations close to the sequential access $(1,\dots,n)$, the quantities are asymptotically tight bounds for the BST problem.

We can strengthen both bounds, by summing the \emph{logarithms} of the heights, respectively weights. More precisely, we define for an arbitrary monotone tree $T$ the quantities 
$$H'(T) = \displaystyle\sum_{(u,v) \in T} {\log{\big( h(u,v) \big)}},$$
$$W'(T) = \displaystyle\sum_{(u,v) \in T} {\log{\big( w(u,v) \big)}}.$$

If $T$ is the initial treap on $X$, and $P$ is the path on $X$, then $W'(P)$ is the classical \emph{dynamic finger} bound, and $H'(T)$ is (essentially) the classical \emph{working set} bound~\cite{ST85}. (Working set is typically defined in the literature with respect to the last occurrence of the \emph{same element} in an access sequence. However, for permutation access sequences it is natural to consider the occurrence of the nearest successor or predecessor among the already seen elements, which is exactly what the quantity $H'(T)$ captures.) The bounds $W'(P)$ and $H'(T)$ no longer hold for \emph{every} algorithm, but $W'(P)$ is known to be asymptotically matched by certain algorithms, e.g.\ Greedy~\cite{Fox11, LI16} and Splay tree~\cite{ColeMSS00,Cole00}. 

A different upper bound on the cost of every Tree Relaxation algorithm can be computed by summing for all vertices in the monotone tree, the distance to the root. (By distance we mean the number of edges on the path to the root.) Again, it can be seen that this quantity strictly increases with ever edge-flip operation, reaching in the end the value $n(n-1)/2$.

\paragraph{New heuristics.}

The above quantities suggest natural greedy heuristics for Tree Relaxation. For instance, in every step we may perform the edge-flip that decreases the total edge height the most, or that increases the total edge width the most, or that increases the total distance-from-the-root the most. We leave for further research the question of how efficient (and how natural) the corresponding BST algorithms are.

In the Rectangulation problem we flip from the all-vertical to the all-horizontal state. Natural measures of quality for any intermediate state include the total length of remaining vertical segments, the total length of horizontal segments, or the difference between the two quantities. It would seem natural to perform flips that greedily optimize any of these quantities. It remains open whether the resulting BST algorithms are efficient.

\paragraph{Interpretations of Greedy.}

to be added soon

\newpage 
\bibliographystyle{plain}
\bibliography{ref}

\newpage 
\appendix 
\clearpage

\end{document}